\newtheorem{theorem}{Theorem}
\newtheorem{proposition}[theorem]{Proposition}
\newtheorem{corollary}[theorem]{Corollary}
\newtheorem{lemma}[theorem]{Lemma}
\theoremstyle{definition}
\newtheorem{remark}[theorem]{Remark}
\newtheorem{example}[theorem]{Example}
\newtheorem{definition}[theorem]{Definition}
\newcommand{\cB}{\mathcal{B}}
\newcommand{\cF}{\mathcal{F}}
\newcommand{\cM}{\mathcal{M}}
\newcommand{\cP}{\mathcal{P}}
\newcommand{\cX}{\mathcal{X}}
\newcommand{\E}{\mathbb{E}}
\newcommand{\N}{\mathbb{N}}
\renewcommand{\P}{\mathbb{P}}
\newcommand{\R}{\mathbb{R}}
\newcommand{\al}{\alpha}
\newcommand{\be}{\beta}
\newcommand{\de}{\delta}
\newcommand{\ep}{\varepsilon}
\newcommand{\ga}{\gamma}
\newcommand{\om}{\omega}
\newcommand{\si}{\sigma}
\newcommand{\rh}{\varrho}
\newcommand{\Om}{\Omega}
\newcommand{\la}{\lambda}
\renewcommand{\d}{{\rm d}}
\newcommand{\eins}{\mathds{1}}
\newcommand{\lo}{{\rm lo}}
\DeclareMathOperator{\VaR}{\rm VaR}
\DeclareMathOperator{\spear}{sp}
\DeclareMathOperator{\cov}{cov}
\DeclareMathOperator{\var}{var}
\DeclareMathOperator{\cor}{cor}
\DeclareMathOperator{\cpl}{cpl}
\DeclareMathOperator{\cnt}{C}
\newcommand{\Cb}{\cnt_{\rm b}}
\DeclareMathOperator{\borel}{B}
\newcommand{\Bb}{\borel_{\rm b}}
\DeclareMathOperator{\pr}{pr}
\newcommand{\es}{\emptyset}
\newcommand{\sm}{\setminus}
\title[Dependence uncertainty and tail risk]{Upper comonotonicity and risk aggregation under dependence uncertainty}
\author{Corrado De Vecchi}
\address{Department of Mathematics, Technical University of Munich, Germany}
\email{corrado.de.vecchi@tum.de}
\author{Max Nendel}
\address{Center for Mathematical Economics, Bielefeld University, Germany}
\email{max.nendel@uni-bielefeld.de}
\author{Jan Streicher}
\address{Center for Mathematical Economics, Bielefeld University, Germany and Landesbank Baden-W\"urttemberg, Stuttgart, Germany}
\email{jan.streicher@uni-bielefeld.de}
\thanks{The second and third author were funded by the Deutsche Forschungsgemeinschaft (DFG, German Research Foundation) -- SFB 1283/2 2021 -- 317210226.\ The third author is grateful for the support of the Landesbank Baden-W\"urttemberg related to this work.}
\date{\today}
\begin{document}
\maketitle

\begin{abstract}
In this paper, we study dependence uncertainty and the resulting effects on tail risk measures, which play a fundamental role in modern risk management.\
We introduce the notion of a regular dependence measure, defined on multi-marginal couplings, as a generalization of well-known correlation statistics such as the Pearson correlation.\ The first main result states that even an arbitrarily small positive dependence between losses can result in perfectly correlated tails beyond a certain threshold and seemingly complete independence before this threshold.\
In a second step, we focus on the aggregation of individual risks with known marginal distributions by means of arbitrary nondecreasing left-continuous aggregation functions. In this context, we show that under an arbitrarily small positive dependence, the tail risk of the aggregate loss might coincide with the one of perfectly correlated losses. A similar result is derived for expectiles under mild conditions.\ In a last step, we discuss our results in the context of credit risk, analyzing the potential effects on the value at risk for weighted sums of Bernoulli distributed losses.

\smallskip
\noindent {\it Keywords:} Dependence uncertainty, dependence measure, risk aggregation, multi-marginal coupling, copula, tail event, tail risk measure, value at risk, expectile

\smallskip 
\noindent{\it JEL Classification:} G22; G28; G32
\end{abstract}

\section{Introduction}

The estimation of tail risk is an integral part of modern regulation as it is supposed to quantify the liquidity of financial institutions in the case of rare extreme events.\ 
For banks, credit risk is the most important type of risk.\ 
Loan portfolios or, equivalently, in an insurance context, insurance portfolios typically comprise a large number of customers, each of them representing the risk of a potential loss for the bank or the insurance company.\ While the loss distribution of each customer, individually, may be well-known, e.g., due to ratings or data for classification in risk categories,
the distribution of the aggregate portfolio loss is usually difficult to assess as it requires precise knowledge of the dependence structures within the portfolio, i.e., the \textit{joint} distribution of the loss profiles of all customers.\ However, the modelling and estimation of dependencies within large portfolios represents a complex and daunting task from a mathematical and statistical point of view, respectively, cf.\ \cite{Academicresponse,McNeil2010} and the references therein.\ Therefore, the analysis of the joint distribution is typically confined to a quantification of dependencies using dependence measures, such as the Pearson correlation, Kendall's tau or Spearman's rho.\ As a consequence, the joint distribution is highly susceptible to model uncertainty, and a broad strand of literature has formed around the study of risk bounds under dependence uncertainty, i.e., the study of worst-case scenarios for risk measures of the aggregate loss when the dependence structure among the random variables of interest in not completely specified. 

In this context, the literature distinguishes between two cases, the first one being the case of full dependence uncertainty, i.e., when no information on the dependence structure is available and only the marginal distributions are known.\ Early and seminal contributions in this direction include \cite{Makarov1981} and \cite{Rush1982}, where the authors solve the problem of finding the worst-case value at risk (VaR) for the sum of two random variables, followed by a series of related publications, most of them focusing on the study of worst-case VaR for the sum of an arbitrary number of random losses, cf.\ \cite{MR3145855,DENUIT1999, EMPURU2013,WANG2013}.\ A more recent contribution that studies bounds for spectral risk measures of general aggregation functions using an optimal transport approach is \cite{Ghossoubetal2023}, and general results for tail risk measures of the aggregate sum have been derived in \cite{liu2021theory}.

Although a complete specification of the dependence structure may be difficult to attain in practice, it is realistic to assume that at least some partial knowledge can be inferred from available data.\ This observation has motivated the second strand of literature, which studies risk aggregation problems under partial information on the dependence structure, in addition to the knowledge of the marginal distributions.
One possibility is to work under the assumption that some point wise bounds are available for the copula or the joint distribution of interest, cf.\
\cite{Bignozzietal2015,Embrechts2003}. Another possibility to describe partial knowledge is to assume that the copula is specified on a subset of its domain.\ This case has been studied in \cite{BernVand2015}.\ In the financial industry, partial dependence information is usually described using one or more measures of dependence, the most popular of them being the Pearson correlation, cf.\ \cite{MaiSchererbook2}. In this context, \cite{bernard2023impact} shows that, for the sum of two random variables, a partial specification of the dependence structure via bounds under a dependence measure, such as Pearson's correlation, hardly affects the worst-case VaR. We also refer to \cite{lam2021tail} for an analysis of the worst-case tail behaviour based on a property called tail convexity. 

For an arbitrary number of random variables, risk aggregation problems have been studied under quite specific assumptions on the considered dependence measure.\ For instance, \cite{Bernruvan2017} studied VaR bounds, assuming that the variance of the sum is known to be less or equal than a certain threshold.\ A similar situation in the context of credit risk with exchangeable losses has been considered in \cite{MR4205881}.

In the present paper, we introduce a general notion of dependence measure, comprising essentially all dependence measures used in risk management, cf.\ Example \ref{ex.dep.meas}.\ In our setting, a dependence measure is a map that assigns a value between $-1$ and $1$ to multi-marginal couplings and zero to the product measure.\ Moreover, we say that a dependence measure is regular if it is lower semicontinuous at the product measure with respect to convergence in distribution.\ Given a regular dependence measure, our first main result, Theorem \ref{th: main result 1}, shows that, if an arbitrarily small positive dependence between losses cannot be excluded, their joint probability distribution might have perfectly correlated tails beyond a certain threshold while pretending to have independent marginals before this threshold, so that the true dependence structure in the tails is only revealed after observing an extreme event with perfectly correlated losses.

In Section \ref{se: tail risk measures}, we then apply Theorem \ref{th: main result 1} to tail risk measures of aggregate losses for arbitrary nondecreasing left-continuous aggregation functions.\ Theorem \ref{th: tailequivalent} shows that the worst-case tail risk could potentially be greater or equal than the tail risk under perfect correlation.\ This becomes even more apparent if one considers a coherent tail risk measure, such as the expected shortfall, together with a positively weighted sum as an aggregation function and identically distributed marginals. In this case, the worst-case expected shortfall of the aggregate loss coincides with the weighted sum of the individual expected shortfalls, cf.\ Corollary \ref{cor:tail risk.pos.hom} and the discussion thereafter.\ Using the main result in \cite{MR4216328}, the same holds true for certain expectiles despite the fact that they are not tail risk measures, cf.\ Theorem \ref{thm: expectiles}.\ 

In Example \ref{ex.application}, we illustrate our results in a credit risk context for the value at risk of a sum of Bernoulli distributed random variables for the typical threshold of $99.9\%$, corresponding to a once in a thousand years event.\ Already for $1000$ customers having a correlation less than $0.1$ and $1\%$ probability of default, the worst-case value at risk exceeds the value at risk, given that we have not yet observed a once in a thousand years event, by a factor of $50$.\ This simple example underlines the high sensitivity of tail risk for aggregate positions with respect to imperfect knowledge of the dependence structure.

The rest of the paper is organized as follows.\ In Section \ref{sc: Dependence meausres}, we introduce the notion of a regular dependence measure and state the fist main result, Theorem \ref{th: main result 1}, where we construct the previously described joint distribution.\ In Section \ref{ sc: Monotonicity}, we investigate monotonicity properties of the joint distribution in lower orthant order and PQD order with respect to the dependence constraint.\ In Section \ref{se: tail risk measures}, we transfer the results from Section \ref{sc: Dependence meausres} to the value at risk, cf.\ Theorem \ref{th: var.equality} and, in a second step, to arbitrary tail risk measures and expectiles, cf.\ Theorem \ref{th: tailequivalent} and Theorem \ref{thm: expectiles}. The Appendix \ref{app.A} contains two standard results on multi-marginal couplings and the convergence of copulas, which we state and prove for the reader's convenience.\\

\textbf{Notation:} Let $(S,\tau)$ be a Polish space.\ Then, $\cB(S):=\si(\tau)$ denotes the Borel $\si$-algebra on $S$ and $\cP(S)$ denotes the set of all probability measures on $\cB(S)$. The space of all bounded continuous functions $S\to \R$ is denoted by $\Cb(S)$ and the space of all bounded Borel measurable functions $S\to \R$ by $\Bb(S)$.\ Throughout, the set $\cP(S)$ is endowed with the weak topology.\ Recall that the weak topology is metrizable and that a sequence $(\mu^k)_{k\in \N}\subset \cP(S)$ converges to $\mu\in \cP(S)$ in the weak topology if and only if
\[
\lim_{k\to \infty}\int_S f(x)\,\mu^k(\d x)= \int_S f(x)\,\mu(\d x)\quad\text{for all }f\in \Cb(S).
\]
In this case, we write $\mu^k\to \mu$ as $k\to \infty$. 

For $\nu\in \cP(\R)$, we use the notation $F_\nu(a):=\nu\big((-\infty,a]\big)$ for $a\in \R$ and
\[
F_\nu^{-1}(u):=\inf\big\{ a\in \R \,\big|\, F_\nu(a)\geq u\big\}\quad\text{for }u\in (0,1).
\]
For two random variables $X$ and $Y$ on some probability space $(\Om,\cF,\P)$, we write $X\stackrel{\rm d}{=}Y$ if $X$ and $Y$ have the same distribution under $\P$.\ Moreover, for a random variable $X$ on  some probability space $(\Om,\cF,\P)$ with distribution $\nu\in \cP(\R)$,
$$\VaR_\P^\alpha (X):= \inf \big\{ a\in \R\,\big|\,\P(X>a)\leq 1-\alpha\big\}=F_\nu^{-1}(\al)$$ 
denotes the left-continuous version of the \textit{value at risk} (VaR) at level $\alpha\in(0,1)$.\
Last but not least, we use the convention $\frac{0}{0}:=0$.

\section{Multivariate dependence measures and first main result} \label{sc: Dependence meausres}

Let $n\in \N$.\ In this section, we introduce the notion of a dependence measure for general Borel probability measures on $\R^n$.\ For $i=1,\ldots, n$, let $\pr_i\colon \R^n\to \R$ be the $i$-th coordinate projection.

For $\mu_1,\ldots,\mu_n\in \cP(\R)$, we denote by $\cpl(\mu_1,\ldots,\mu_n)$ the set of all $\pi\in \cP(\R^n)$ with
$$
\pi\circ\pr_i^{-1}=\mu_i\quad \text{for all }i=1,\ldots, n.
$$
The elements of $\cpl(\mu_1,\ldots,\mu_n)$ are called \textit{multi-marginal couplings}.\ We refer to \cite{MR0761565} and \cite{MR3423275} for an overview on multi-marginal optimal transport problems and applications. Moreover, we refer to \cite{ambrosio2008gradient} and \cite{villani2008optimal} for a survey on classical optimal transport and related topics.\ Observe that, for all $\mu_1,\ldots,\mu_n\in \cP(\R)$, the product measure $\mu_1\otimes \cdots \otimes \mu_n$ is always an element of $\cpl(\mu_1,\ldots, \mu_n)$.\ In the sequel, we also use the notation
$$
\cpl(\mu)=\cpl(\mu_1,\ldots,\mu_n) \quad\text{for }\mu=(\mu_1,\ldots, \mu_n)\in \cP(\R)^n
$$
and, for a nonempty set $\cM\subset \cP(\R)$, we define
$$
\cpl^n(\cM):=\bigcup_{\mu\in \cM^n} \cpl(\mu).
$$

\begin{definition}\label{def.dependence}
Let $\cM\subset \cP(\R)$ be nonempty.
\begin{enumerate}
\item[a)] A map $\rh\colon \cpl^n(\cM)\to [-1,1]$ is called a \textit{dependence measure} if 
$$
\rh\big(\mu_1\otimes\cdots\otimes \mu_n\big)=0\quad \text{for all }\mu_1,\ldots,\mu_n\in \cM.
$$
\item[b)] We say that a dependence measure $\rh\colon \cpl^n(\cM)\to [-1,1]$ is \textit{regular} if, for all $\mu_1,\ldots,\mu_n\in \cM$ and any sequence $(\pi^k)_{k\in \N}\subset \cpl(\mu_1,\ldots, \mu_n)$ with $\pi^k\to \mu_1\otimes\cdots\otimes \mu_n$ as $k\to \infty$, it follows that
\[
\limsup_{k\to \infty} \rh(\pi^k)\leq 0.
\]
\end{enumerate}
\end{definition}

Note that we do not exclude the case $n=1$. However, in this case, the only dependence measure is $\rh\equiv 0$. If $n=2$, we use the terminology \textit{bivariate dependence measure}.\ 
In situations, where we explicitly consider the case $n>2$, we also use the expression \textit{multivariate dependence measure}.

\begin{remark}
Let $\cM\subset \cP(\R)$ be nonempty.\ Then, we say that a dependence measure $\rh\colon \cpl^n(\cM)\to [-1,1]$ is \textit{symmetric} if, for all multi-marginal couplings $\pi\in \cpl^n(\cM)$ and any permutation $\si\colon \{1,\ldots, n\}\to \{1,\ldots, n\}$, 
\[
\rh(\pi)=\rh\big(\pi\circ (x\mapsto x_\si)^{-1}\big),
\]
where $x_\si:=(x_{\si(1)},\ldots, x_{\si(n)})\in \R^n$ for all $x\in \R^n$.
\end{remark}

\begin{example}\label{ex.abs.dep.meas}
Let $\cM\subset \cP(\R)$ be nonempty.
\begin{enumerate}
 \item[a)] Let $\rh\colon \cpl(\cM)\to [-1,1]$ be a dependence measure with  
\[
\lim_{k\to \infty} \rh(\pi^k)= 0
\]
for any sequence $(\pi^k)_{k\in \N}\subset \cpl(\mu_1,\ldots, \mu_n)$ with $\pi^k\to \mu_1\otimes\cdots\otimes \mu_n$ as $k\to \infty$ and $\mu_1,\ldots,\mu_n\in \cM$. Then, both, $\rh$ and $-\rh$ are regular dependence measures.
\item[b)] Let $\rh_j\colon \cpl^n(\cM)\to [-1,1]$ be a regular dependence measure for $j=1,\ldots, \ell$ with $\ell\in \N$ and $D\colon [-1,1]^\ell\to [-1,1]$ be a nondecreasing function with $$D(0)=\lim_{\de\downarrow 0} D(\de \textbf{1})=0,$$ where $\textbf{1}$ denotes the $\ell$-dimensional vector consisting of only ones.\ Then, the map $\rh\colon \cpl^n(\cM)\to [-1,1]$, given by
 \[
 \rh(\pi):=D\big(\rh_1(\pi),\ldots, \rh_\ell(\pi)\big)\quad \text{for }\pi\in \cpl^n(\cM),
 \]
 defines a regular dependence measure. In fact, let $\mu_1,\ldots, \mu_n\in \cM$.\ Then, 
 \[
  \rh(\mu_1\otimes\cdots\otimes \mu_n)=D\big(\rh_1(\mu_1\otimes \cdots\otimes \mu_n),\ldots, \rh_\ell(\mu_1\otimes\cdots\otimes \mu_n)\big)=D(0)=0.
 \]
 Now, let $(\pi^k)_{k\in \N}\subset \cpl^n(\mu)$ with $\pi^k\to \mu_1\otimes \cdots\otimes \mu_n$ as $k\to \infty$ and $\ep>0$.\ Then, there exists some $\de>0$ such that $D(\de \textbf{1})<\ep$. Since $\rh_1,\ldots, \rh_\ell$ are regular, there exists some $k_0\in \N$ such that
 \[
 \sup_{k\geq k_0} \rh_j(\pi^k)\leq \de\quad\text{for }j=1,\ldots, \ell.
 \]
 Since $D$ is nondecreasing, it follows that 
 \begin{align*}
  \sup_{k\geq k_0} \rh(\pi^k)&=\sup_{k\geq k_0} D\big(\rh_1(\pi^k), \ldots, \rh_\ell(\pi^k)\big)\leq D\bigg(\sup_{k\geq k_0}\rh_1(\pi^k),\ldots, \sup_{k\geq k_0}\rh_\ell(\pi^k)\bigg)\\
  &\leq D(\de \textbf{1})<\ep.
 \end{align*}
 Explicit examples for the function $D$ are given by the following constructions.
 \begin{itemize}
  \item \textit{Weighted sum.}\ For fixed weights $w_1,\ldots, w_\ell\in [0,1]$ with $\sum_{j=1}^\ell w_j\leq 1$, let
  $$D(r):=\sum_{j=1}^\ell w_j r_j\quad\text{for }r=(r_1,\ldots, r_\ell)\in [-1,1]^\ell.$$
  \item \textit{Weighted minimum/maximum.}\ For fixed weights $w_1,\ldots, w_\ell\in [0,1]$, let
  $$D(r):=\min_{j=1,\ldots,\ell} w_j r_j\quad\text{or}\quad D(r):=\max_{j=1,\ldots,\ell} w_j r_j\quad \text{for }r\in [-1,1]^\ell.$$
 \end{itemize}
 \item[c)] Let $n\geq0$ and $\rh_{ij}\colon \cpl^2(\cM)\to [-1,1]$ be a regular bivariate dependence measure for all $i,j\in \{1,\ldots, n\}$ with $i\neq j$.\ For $\pi\in \cP(\R^n)$, let $$\pi_{ij}:=\pi\circ(\pr_i,\pr_j)^{-1}\in \cP(\R^2)\quad \text{for }i,j=1,\ldots, n\text{ with }i\neq j.$$ By definition, $\pi_{ij}\in \cpl^2(\cM)$ for $\pi\in \cpl^n(\cM)$ and $i,j=1,\ldots, n$ with $i\neq j$, and 
 $$
  \cpl^n(\cM)\to [-1,1],\; \pi\mapsto \rh_{ij}(\pi_{ij})
 $$
 is a regular dependence measure for all $i,j=1,\ldots,n$ with $i\neq j$.\ In fact, for all $\mu_1,\ldots, \mu_n\in \cM$ and $i,j=1,\ldots, n$ with $i<j$,
 \[
 (\mu_1\otimes\cdots\otimes \mu_n)_{ij}=\mu_i\otimes \mu_j
 \]
 and $\pi_{ij}^k\to \mu_i\otimes \mu_j$ as $k\to \infty$ for any sequence $(\pi^k)_{k\in \N}\subset \cpl^n(\cM)$ with $\pi^k\to \mu_1\otimes \cdots \otimes\mu_n$ as $k\to \infty$.
  
   Hence, by part a), for any nondecreasing function $D\colon [-1,1]^{n(n-1)}\to [-1,1]$ with $D(0)=\lim_{\de\downarrow 0} D(\de \textbf{1})=0$, where $\textbf{1}$ denotes the $n(n-1)$-dimensional vector consisting of only ones, $\rh\colon \cpl^n(\cM)\to [-1,1]$, given by
 \begin{equation}\label{eq.aggregation}
 \rh(\pi):=D\Big(\big(\rh_{ij}(\pi_{ij})\big)_{i\neq j}\Big)\quad \text{for }\pi\in \cpl^n(\cM),
 \end{equation}
 defines a regular dependence measure.
\end{enumerate}
\end{example}

In the following example, we present prominent bivariate regular dependence measures, which can be used as building blocks for multivariate dependence measures, as described in the previous example.

\begin{example}\label{ex.dep.meas}
\begin{enumerate}
 \item[a)] Let $\cM$ be the set of all $\mu\in \cP(\R)$ with $\int_\R x^2\,\mu(\d x )<\infty$, and recall that we use the convention $\frac00=0$.\ For $\mu\in \cM$, let
 \[
 \var(\mu):=\int_{\R^2} x^2\,\mu(\d x )-\bigg(\int_\R x\,\mu(\d x )\bigg)^2.
 \]
  Then, for $\mu,\nu\in \cM$ and $\pi\in \cpl(\mu,\nu)$, the \textit{Pearson correlation}
 \[
 \cor(\pi):=\frac{\cov (\pi)}{\sqrt{\var(\mu)\var(\nu)}}
 \]
 defines a symmetric and regular dependence measure, where
 \[
 \cov(\pi):=\int_{\R^2} xy \,\pi(\d x,\d y)-\int_\R x\,\mu(\d x)\int_\R y\,\nu(\d y).
 \]
 Clearly, $\cor$ is a dependence measure.\ In order to prove the regularity of $\cor$, let $\mu,\nu\in \cP(\R)$ and $(\pi^k)_{k\in \N}\subset \cpl(\mu,\nu)$ with $\pi^k\to \mu\otimes \nu$ as $k\to \infty$.\ Then,
 \[
   \int_{\R^2} x^2+y^2 \,\pi^k(\d x,\d y)=\int_\R x^2\,\mu(\d x)+\int_\R y^2\,\nu(\d y)\quad\text{for all }k\in \N.
 \]
 Hence, by \cite[Definition 6.8]{villani2008optimal},
 \[
 \int_{\R^2} xy\,\pi^k(\d x,\d y)\to \int_\R x \,\mu(\d x) \int_\R y \,\nu(\d y)\quad \text{as }k\to \infty,
 \]
 so that $\lim_{k\to \infty}\cor(\pi^k)= 0$. 
 \item[b)] Let $\cM$ be the set of all Borel probability measures on $\R$ with continuous distribution function.\ By Sklar's theorem, for $\mu,\nu\in \cM$ and $\pi\in \cpl(\mu,\nu)$, there exists a unique copula $c\colon [0,1]^2\to [0,1]$ with
 \[
 \pi\big((-\infty, a]\times (-\infty,b]\big)=c\big(F_\mu(a),F_\nu(b)\big)\quad\text{for all }a,b\in \R.
 \]
  Then, for $\pi\in \cpl(\cM)$ with copula $c$, \textit{Spearman's rho}
 \[
  \spear(\pi):=12\int_{[0,1]^2} c(u,v)\,\d(u,v)-3
 \]
 defines a symmetric and regular dependence measure.\ Since $\int_{[0,1]^2} uv\,\d(u,v)=\frac14$, it follows that $\spear(\mu\otimes \nu)=0$ for all $\mu,\nu\in \cP(\R)$.\ Now, let $\mu,\nu\in \cP(\R)$ and $(\pi^k)_{k\in \N}\subset \cpl(\mu,\nu)$ with $\pi^k\to \mu\otimes \nu$ as $k\to \infty$. Since $F_\mu$ and $F_\nu$ are continuous, by Lemma \ref{lem.unif.convergence.couplings}, it follows that
 \[
 \sup_{u,v\in [0,1]}\big| c^k(u,v)-uv\big|\to 0\quad\text{as }k\to \infty,
 \]
  where $c^k$ is the copula related to $\pi^k$. Hence,
 \[
  \lim_{k\to \infty}\int_{[0,1]^2} c^k(u,v)\,\d(u,v)=\int_{[0,1]^2} uv\,\d(u,v),
 \]
 so that $\lim_{k\to \infty}\spear(\pi^k)= 0$.
 \item[c)] Consider the same situation as in the previous example.\ Then, for $\pi\in \cpl(\cM)$ with copula $c$, \textit{Kendall's tau}
 \[
 \tau(\pi):=4\int_{[0,1]^2} c(u,v)\,\d c(u,v)-1
 \]
 defines a symmetric and regular dependence measure.\ In fact, since $$\int_{[0,1]^2} uv\,\d(u,v)=\frac14,$$ it follows that $\tau(\mu\otimes \nu)=0$ for all $\mu,\nu\in \cP(\R)$. In order to prove the regularity, let $\mu,\nu\in \cP(\R)$ and $(\pi^k)_{k\in \N}\subset \cpl(\mu,\nu)$ with $\pi^k\to \mu\otimes \nu$ as $k\to \infty$.\ Again, since $F_\mu$ and $F_\nu$ are continuous, by Lemma \ref{lem.unif.convergence.couplings}, it follows that
 \[
 \sup_{u,v\in [0,1]}\big| c^k(u,v)-uv\big|\to 0\quad\text{as }k\to \infty,
 \]
 where $c^k$ is the copula related to $\pi^k$.\ Hence, by H\"older's inequality,
 \[
 \bigg|\int_{[0,1]^2} c^k(u,v)\,\d c^k(u,v) -\frac14\bigg|\leq \sup_{u,v\in [0,1]} \big|c^k(u,v)-uv\big|+\bigg|\int_{[0,1]^2} uv\,\d c^k(u,v)- \frac{1}4\bigg|\to 0
 \]
 as $k\to \infty$.\ We have therefore shown that $\lim_{k\to \infty}\tau(\pi^k)= 0$.
\end{enumerate}

\end{example}

Let $\cM$ be nonempty, $(\Om,\cF,\P)$ be an atomless probability space, and $\rh\colon \cpl^n(\cM)\to [0,1]$ be a dependence measure.\ Then, for random variables $X_1,\ldots,X_n$ on $\Om$, we use the notation
\[
\rh(X_1,\ldots, X_n):=\rh \big(\P\circ (X_1,\ldots,X_n)^{-1}\big).
\]
Recall that a probability space $(\Om,\cF,\P)$ is atomless if and only if there exists a uniformly distributed random variable $U\colon \Om \to (0,1)$.\ The following theorem is the first main result.

\begin{theorem} \label{th: main result 1}
Let $(\Om,\cF,\P)$ be a probability space, $U\colon \Om\to (0,1)$ uniformly distributed, $n\in \N$, $\cM\subset \cP(\R)$ nonempty, and $\rh\colon \cpl^n(\cM)\to [-1,1]$ a regular dependence measure.\ Then, for all $\de\in (0,1)$ and $\mu_1,\ldots, \mu_n\in \cM$, there exist $\ga\in (0,1)$ and random variables $X_1,\ldots X_n$ on $\Om$ such that
\begin{enumerate}
\item[(i)] $X_i\sim \mu_i$ for all $i=1,\ldots, n$,
\item[(ii)] $\rh(X_1,\ldots, X_n)\leq \de$,
\item[(iii)] $X_1,\ldots, X_n$ are conditionally independent on the event $\{U\leq \ga\}$,
\item[(iv)] for all $i=1,\ldots, n$,
\[
\P\big(\big\{X_i> F_{\mu_i}^{-1}(\ga)\big\}\cap \{U\leq \ga\}\big)=0\quad\text{and}\quad X_i\eins_{\{U> \ga\}}=F_{\mu_i}^{-1}(U)\eins_{\{U> \ga\}}.
\]
\end{enumerate}
\end{theorem}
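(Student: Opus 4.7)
The plan is to fix $\ga \in (0,1)$ (to be chosen later) and construct $X_1, \ldots, X_n$ that are comonotonic on the tail event $\{U > \ga\}$ and conditionally independent on $\{U \leq \ga\}$ with the appropriately conditioned marginals. The joint distribution will then converge weakly to $\mu_1 \otimes \cdots \otimes \mu_n$ as $\ga \uparrow 1$, and regularity of $\rh$ will yield (ii) for some $\ga$ close enough to $1$.

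On $\{U \leq \ga\}$, the random variable $U/\ga$ is conditionally uniform on $(0,1)$. Splitting its binary digits into $n$ subsequences produces $V_1, \ldots, V_n$ that, conditional on $\{U \leq \ga\}$, are iid uniform on $(0,1)$; extend them arbitrarily to $\{U > \ga\}$. Set
\[
X_i := F_{\mu_i}^{-1}(\ga V_i)\eins_{\{U \leq \ga\}} + F_{\mu_i}^{-1}(U)\eins_{\{U > \ga\}}, \qquad i = 1, \ldots, n.
\]
Property (iii) follows from the conditional iid-ness of the $V_i$'s, and (iv) is immediate from the defining formula and the monotonicity of the quantile (since $\ga V_i < \ga$ on $\{U \leq \ga\}$ yields $X_i \leq F_{\mu_i}^{-1}(\ga)$ there). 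For (i), using the standard identity $F_{\mu_i}^{-1}(v) \leq a \iff v \leq F_{\mu_i}(a)$ for $v \in (0,1)$, a short case distinction gives
\[
\P(X_i \leq a) = \min\bigl(F_{\mu_i}(a), \ga\bigr) + \bigl(F_{\mu_i}(a) - \ga\bigr)^+ = F_{\mu_i}(a).
\]

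It remains to verify (ii). Let $\pi^\ga$ denote the joint law of $(X_1, \ldots, X_n)$. For any $f \in \Cb(\R^n)$, the construction yields
\[
\int f\,\d\pi^\ga = \ga\int_{(0,1)^n} f\bigl(F_{\mu_1}^{-1}(\ga v_1),\ldots,F_{\mu_n}^{-1}(\ga v_n)\bigr)\,\d v + \int_\ga^1 f\bigl(F_{\mu_1}^{-1}(u),\ldots,F_{\mu_n}^{-1}(u)\bigr)\,\d u.
\]
The second summand is bounded by $\|f\|_\infty(1-\ga)\to 0$ as $\ga\uparrow 1$; in the first, the left-continuity of each $F_{\mu_i}^{-1}$ gives $F_{\mu_i}^{-1}(\ga v_i)\to F_{\mu_i}^{-1}(v_i)$ pointwise, and dominated convergence delivers convergence to $\int f\,\d(\mu_1 \otimes \cdots \otimes \mu_n)$. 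Thus $\pi^\ga\to\mu_1 \otimes \cdots \otimes \mu_n$ weakly as $\ga\uparrow 1$, and regularity of $\rh$ then gives $\limsup_{\ga\uparrow 1}\rh(\pi^\ga)\leq 0$, so for $\ga$ sufficiently close to $1$ we have $\rh(\pi^\ga)\leq\de$. The only subtle point is the appeal to the left-continuity of the quantile function, which is precisely what is needed to upgrade the pointwise convergence of the quantile maps to weak convergence of the constructed coupling.
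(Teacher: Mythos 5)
Your proof is correct, and the coupling you build is exactly the one the paper uses: conditionally independent below the threshold $\ga$ with the conditioned marginals, comonotonic above it; indeed your formula for $X_i$ is the explicit representation the paper records in Remark \ref{rem.rap}, with the extra (and necessary) care of manufacturing the conditionally iid uniforms $V_1,\ldots,V_n$ from $U$ itself via digit-splitting so that the construction lives on the a priori given probability space --- the paper achieves the same thing more abstractly by observing that $(C^\ga,\cF\cap C^\ga,\P^\ga)$ is atomless. Where you genuinely diverge is in the verification of weak convergence $\pi^\ga\to\mu_1\otimes\cdots\otimes\mu_n$: the paper tests $\pi^\ga$ only against product functions $f_1(x_1)\cdots f_n(x_n)$, which forces it to first extract a convergent subsequence via compactness of $\cpl(\mu_1,\ldots,\mu_n)$ (Lemma \ref{lem.compactness.couplings}) and then identify the limit through a monotone-class/Dynkin argument; you instead exploit the explicit quantile representation to compute $\int f\,\d\pi^\ga$ for an \emph{arbitrary} $f\in\Cb(\R^n)$ as an integral over $(0,1)^n$, and conclude by left-continuity of the quantile functions plus dominated convergence. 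Your route is more elementary and self-contained (no appeal to tightness, subsequences, or Dynkin's lemma), at the cost of being tied to the concrete representation; the paper's argument is more robust in that it only uses the defining properties of the abstract $Z_i^\ga$. The remaining steps --- the marginal computation, properties (iii) and (iv), and the passage from $\limsup_{\ga\uparrow 1}\rh(\pi^\ga)\leq 0$ to the existence of a $\ga$ with $\rh(\pi^\ga)\leq\de$ --- match the paper and are all sound.
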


\begin{proof}
Let $\mu_1,\ldots, \mu_n\in \cM$ and $Y_i:=F_{\mu_i}^{-1}(U)$ for $i=1,\ldots, n$.\ For $\ga\in (0,1)$, we define $C^{\ga}:=\{U\leq \ga\}$. Since $U$ is uniformly distributed, $(C^{\ga},\cF\cap C^{\ga},\P^{\ga})$ is atomless, where
\[
\P^{\ga}(A):=\frac{\P(A)}\ga\quad\text{for all }
A\in \cF\cap C^{\ga}\]
and $\cF\cap C^{\ga}:=\{A\cap C^{\ga}\,|\, A\in \cF\}$.\ Hence, there exist $Z_1^\ga,\ldots, Z_n^\ga\colon C^{\ga}\to \R$ that are independent under $\P^{\ga}$ and satisfy
\[
\P^{\ga}(Z_i^\ga\leq a)=\frac{\P(\{Y_i\leq a\}\cap C^{\ga})}\ga\quad\text{for all }a\in \R\text{ and }i=1,\ldots, n.
\]
For $i=1,\ldots, n$, we thus define $X_i^\ga\colon \Om\to \R$ by
\[
X_i^\ga(\om):=\begin{cases} Z_i^\ga(\om),&\om\in C^{\ga},\\
Y_i(\om),&\om\in\Om\sm C^{\ga}.
\end{cases}
\]
Then,
\begin{align*}
\P(X_i^\ga\leq a)&=\ga \P^{\ga}(Z_i^\ga\leq a)+\P\big(\{Y_i\leq a\}\cap (\Om\sm C^{\ga})\big)\\
&= \P(\{Y_i\leq a\}\cap C^{\ga})+\P\big(\{Y_i\leq a\}\cap (\Om\sm C^{\ga})\big)\\
&=\P(Y_i\leq a)=F_{\mu_i}(a)
\end{align*}
for all $a\in \R$ and $i=1,\ldots, n$. Therefore, $X_i^\ga\sim \mu_i$ and
\begin{align*}
\P\big(\big\{X_i^\ga> F_{\mu_i}^{-1}(\ga)\big\}\cap \{U\leq \ga\}\big)&=\ga\P^{\ga}\big(Z_i^\ga>F^{-1}_{\mu_i}(\ga)\big)=\P\big(\big\{Y_i>F^{-1}_{\mu_i}(\ga)\big\}\cap C^{\ga}\big)\\
&=\P\big(\big\{F_{\mu_i}^{-1}(U)>F^{-1}_{\mu_i}(\ga)\big\}\cap\{U\leq \ga\}\big)\\
&=\P(\es)=0\quad\text{for }i=1,\ldots, n.
\end{align*}
We define $\pi^{\ga}:=\P\circ (X^\ga)^{-1}\in \cpl(\mu_1,\ldots, \mu_n)$ for all $\ga\in (0,1)$, and it remains to show that $\pi^\ga\to \mu_1\otimes \cdots \otimes \mu_n$ as $\ga\to 1$. To that end, let $(\ga^k)_{k\in \N}\subset (0,1)$ with $\ga^k\to 1$. Since $\cpl(\mu_1,\ldots, \mu_n)$ is weakly compact, cf.\ Lemma \ref{lem.compactness.couplings}, there exists a subsequence $(\ga^{k_l})_{l\in \N}$ with $\pi^{\ga^{k_l}}\to \pi\in \cpl(\mu_1,\ldots, \mu_n)$ as $l\to \infty$.\ On the other hand, by dominated convergence, for all $f_1,\ldots, f_n\in \Bb(\R)$ with $f_i\geq 0$ for $i=1,\ldots, n$,
\begin{align}
\notag \E\big( f_1(X_1^\ga)\cdots f_n(X_n^\ga)\big)&=\ga \E^\ga\big( f_1(Z_1^\ga)\cdots f_n(Z_n^\ga)\big)+\E\big( f_1(Y_1)\cdots f_n(Y_n) \eins_{\{U>\ga\}}\big)\\
\notag & = \ga \E^\ga\big( f_1(Z_1^\ga)\big)\cdots \E^\ga\big(f_n(Z_n^\ga)\big)+\E\big( f_1(Y_1)\cdots f_n(Y_n) \eins_{\{U>\ga\}}\big)\\
\notag & =\ga^{1-n} \prod_{i=1}^n \E\big( f_i(Y_i)\eins_{\{U\leq \ga\}}\big)+\E\big( f_1(Y_1)\cdots f_n(Y_n) \eins_{\{U>\ga\}}\big)\\
&\to \prod_{i=1}^n \E\big( f_i(Y_i)\big)\quad\text{as }\ga\to 1, \label{eq.dist.coupling.ga}
\end{align}
where $\E^\ga$ denotes the expected value with respect to $\P^\ga$. Since $\pi^{\ga^{k_l}}\to \pi$ as $l\to \infty$, this implies that
\[
\int_{\R^n} f_1(x_1)\cdots f_n(x_n)\,\pi(\d x_1,\ldots, \d x_n) = \prod_{i=1}^n\int_\R f_i(x_i) \,\mu_i(\d x_i)
\]
for all $f_1,\ldots, f_n\in \Cb(\R)$ with $f_i\geq0$ for $i=1,\ldots, n$.\ Using monotone convergence together with the fact that $\eins_{U}$ is bounded and lower semicontinuous for all open sets $U\subset \R$, it follows that
\[
\pi(U_1\times \cdots \times U_n)=\mu_1(U_1)\cdots \mu_n(U_n)
\]
for all open sets $U_1,\ldots, U_n\subset \R$.\ Since the system of all open subsets of $\R$ is an intersection-stable generator of the Borel $\si$-algebra, it follows that $\pi=\mu_1\otimes \cdots \otimes\mu_n$ by Dynkin's lemma.\ Hence, $\pi^{\ga^{k_l}}\to \mu_1\otimes \cdots \otimes \mu_n$ as $l\to \infty$, which implies that $\pi^{\ga}\to \mu_1\otimes \cdots \otimes \mu_n$ as $\ga\to 1$, since we have shown that every subsequence has a further subsequence that converges to the same limit $\mu_1\otimes \cdots \otimes \mu_n$. Since $\rh$ is regular, it follows that
\[
\limsup_{\ga\to 1}\rh(\pi^{\ga})\leq 0.
\]
The proof is complete.
\end{proof}

\begin{remark}\label{rem.mult.dependence}
 Although Theorem \ref{th: main result 1} is formulated in terms of a single constraint 
 \begin{equation}\label{eq.delta.constraint}
 \rh(X_1,\ldots, X_n)\leq \de,
 \end{equation}
 it can easily be extended to include, for example, multiple constraints of the form 
 \begin{equation}\label{eq.delta.mult.constraint}
 \rh_j(X_1,\ldots, X_n)\in \big[\de_j^-,\de_j^+\big]\quad\text{for }j=1,\ldots,\ell
 \end{equation}
 with $\ell\in \N$, dependence measures $\rh_j\colon \cpl^n(\cM)\to [-1,1]$ with $\lim_{k\to \infty} \rh_j(\pi^k)=0$ for all sequences $\pi^k\in \cpl(\mu_1,\ldots, \mu_n)$ and $\mu_1,\ldots, \mu_n\in \cM$ with $\pi^k\to \mu_1\otimes\cdots\otimes \mu_n$ as $k\to \infty$, and $\pm\de_j^\pm\in (0,1)$ for $j=1,\ldots, \ell$. In fact, let $\de:=\min_{j=1,\ldots, \ell} \big(\de_j^+\wedge (-\de_j^-)\big)$ and define
 \begin{equation}\label{eq.rh.mult.constraints}
 \rh(\pi):= \de \max_{j=1,\ldots \ell} \bigg(\frac{\rh_j(\pi)}{\de_j^+}\vee \frac{ \rh_j(\pi)}{\de_j^-}\bigg)\quad\text{for all }\pi\in \cpl^n(\cM).
 \end{equation}
 Then, by Example \ref{ex.abs.dep.meas} a) \& b), $\rh\colon \cpl^n(\cM)\to [-1,1]$ is a regular dependence measure. Since $\de_j^+>0$ and $\de_j^-<0$, the $\ell$ constraints formulated in \eqref{eq.delta.mult.constraint} are equivalent to the single constraint \eqref{eq.delta.constraint} for $\rh$ given by \eqref{eq.rh.mult.constraints}. 

 Similarly, for regular dependence measures $\rh_j\colon \cpl^n(\cM)\to [-1,1]$ and $\de_j\in (0,1)$ for $j=1,\ldots, \ell$, one can consider multiple one-sided constraints of the form
\begin{equation}\label{eq.constraints.onesided}
 \rh_j(X_1,\ldots, X_n)\leq \de_j\quad \text{for }j=1,\ldots, \ell,
 \end{equation}
 by setting $\de:=\min_{j=1,\ldots, \ell} \de_j$ and $\rh(\pi):=\max_{j=1,\ldots, \ell}  \frac{\de}{\de_j}\rh_j(\pi)$ for $\pi\in \cpl^n(\cM)$.
\end{remark}

\begin{remark}\label{rem.rap}
    We briefly discuss Theorem \ref{th: main result 1} from a slightly different angle and in view of the existing literature.\ It is worth noting that, in the formulation of Theorem \ref{th: main result 1}, the underlying common risk factor $U$ is fixed a priori, i.e., for \textit{every} given uniformly distributed risk factor $U$, every $\de\in (0,1)$, and all $\mu_1,\ldots, \mu_n\in \cM$, we find $\ga\in (0,1)$ and a random vector $X=X^\ga$ satisfying the properties (i) - (iv).\ Clearly, the uniformly distributed risk factor $U$ can be replaced by any given random variable $W$ with continuous distribution function $F$, replacing $\ga$ by $F^{-1}_{\rm R}(\ga):=\inf\{a\in \R \,|\, F(a)>\ga\}$.\footnote{Indeed, since $F$ is continuous, $U:=F(W)$ is uniform, cf.\ \cite[Lemma A.25]{MR3859905}, and $w\leq F_{\rm R}^{-1}(\ga)$ is equivalent to $F(w)=\P(W<w)\leq \ga$ for all $w\in \R$ and $\ga\in (0,1)$.\ Hence, $W\leq F_{\rm R}^{-1}(\ga)$ if and only if $U=F(W)\leq \ga$.}  
 
    If, instead, we allow ourselves to choose the uniformly distributed random factor $U$ freely, the family $(X^\ga)_{\ga\in (0,1)}$ of random vectors can be constructed more explicitly.\ To that end, let $(V_1,\ldots ,V_n,U)$ be a vector of independent random variables with uniform distribution on $(0,1).$ For given marginal distributions $\mu_1,...,\mu_n \in \cP(\R) $ and $\gamma \in (0,1)$, we define the $i$-th component of the random vector $X^\gamma$ as
   \begin{equation}\label{eq: rap}
       X^\gamma_i=F^{-1}_{\mu_i}\big(\gamma V_i \mathds{1}_{\{U\leq\gamma\}} + U \mathds{1}_{\{U>\gamma\}}\big)\quad \text{for }i=1,\ldots, n.
   \end{equation}
   Then, $X^\ga\sim \pi^{\ga}$ with $\pi^{\ga}\in \cpl(\mu_1,\ldots, \mu_n)$ as in the proof of Theorem \ref{th: main result 1} and, by definition of $X_i^\ga$, $X_i^\ga\eins_{\{U>\ga\}}=F_{\mu_i}^{-1}(U)\eins_{\{U>\ga\}}$ for $i=1,\ldots, n$.\ In fact, using the notation from the proof of Theorem \ref{th: main result 1}, the independence of $(V_1,\ldots, V_n,U)$, and the equality derived along the first three lines of \eqref{eq.dist.coupling.ga}, for all $f_1,\ldots, f_n\in \Bb(\R)$ with $f_i\geq 0$ for $i=1,\ldots, n$,
\begin{align}
\notag  \E\big( f_1(X_1^\ga)\cdots f_n(X_n^\ga)\big)&=\ga \prod_{i=1}^n \E\Big( f_i\big(F_{\mu_i}^{-1}(\ga V_i)\big)\Big)+\E\Big( f_1\big(F_{\mu_1}^{-1}(U)\big)\cdots f_n\big(F_{\mu_n}^{-1}(U)\big) \eins_{\{U>\ga\}}\Big)\\
\notag & = \ga \prod_{i=1}^n\int_0^1 f_i\big(F_{\mu_i}^{-1}(\ga v_i)\big)\,\d v_i +\E\big( f_1(Y_1)\cdots f_n(Y_n) \eins_{\{U>\ga\}}\big)\\
\notag & =\ga^{1-n} \prod_{i=1}^n \int_0^\ga f_i\big(F_{\mu_i}^{-1}(u)\big)\,\d u+\E\big( f_1(Y_1)\cdots f_n(Y_n) \eins_{\{U>\ga\}}\big)\\
\notag & =\ga^{1-n} \prod_{i=1}^n \E\big( f_i(Y_i)\eins_{\{U\leq \ga\}}\big)+\E\big( f_1(Y_1)\cdots f_n(Y_n) \eins_{\{U>\ga\}}\big)\\
\label{eq.coupling.xgamma} &=\int_{\R^d}f_1(x_1)\cdots f_n(x_n)\,\pi^{\ga}(\d x_1,\ldots, \d x_n).
\end{align}
 Since the system 
 $\big\{(-\infty,a]\colon a\in \R^n\big\}$ 
 is an intersection-stable generator of the Borel $\si$-algebra on $\R^n$, it follows that $X^\ga\sim \pi^{\ga}$. Hence, when discussing distributional properties of the family $(X^\ga)_{\ga\in (0,1)}$, we will often use the more convenient representation \eqref{eq: rap} of the $i$-th coordinate of $X^\ga$ for $\ga\in (0,1)$ and $i=1,\ldots, n$.
 
   Observe that $X^\gamma$ is an upper comonotonic random vector in the sense of \cite{Cheug2009uc}.\ More specifically, $X^\gamma$ has a comonotonic support in the upper set $\prod^{n}_{i=1}\big(F^{-1}_{\mu_i}(\gamma), \infty\big)$.
   Furthermore, $X^\gamma$ is an $\alpha$-concentrated random vector in the sense of \cite{WangZitikis2020}, for any $\alpha\in[\gamma,1)$, and the common $\alpha$-tail events can be described as $\{U>\alpha\}$, on which $U$ is the uniform random variable considered in the representation of $X^\gamma$ given in \eqref{eq: rap}.\ 
   Theorem~\ref{th: main result 1} together with Remark \ref{rem.mult.dependence} then states that, even under multiple constraints on possibly different dependence measures, one can always construct an upper comonotonic vector as in \eqref{eq: rap} that satisfies these constraints, as this is true regardless of the considered marginal distributions.\
Random vectors of the form \eqref{eq: rap}, for sufficiently high values of $\gamma$, aim to describe those situations in which extremely large losses happen simultaneously, while losses below a certain threshold are in fact independent.\ Hence, such a dependence structure can have a strong impact on tail risk measures, see Section \ref{se: tail risk measures}, but, from a statistical point of view, it can easily be confused with complete independence unless extreme losses were already observed in the past.
\end{remark}

\section{Monotonicity of $\rh$ and the relation between $\de$ and $\ga$}\label{ sc: Monotonicity}

For two $n$-dimensional random vectors $T^1$ and $T^2$, we write $T^2 \leq_\lo T^1$ --
\textit{lower orthant order} -- if  $\P(T^2\leq a)\leq \P(T^1\leq a)$ for all $a \in \R^n$.\ For a detailed discussion of the lower orthant order and related multivariate stochastic orders, we refer to \cite[Chapter~6]{RushMathRiskAn}. For applications of such comparison criteria in the context of risk aggregation problems, we refer to \cite{Bignozzietal2015} and the references therein.\
For two random vectors with \textit{identical marginals}, i.e., 
$$T^1_i\stackrel{\rm d}{=} T^2_i\quad \text{for }i=1,\ldots,n,$$  the inequality $T^2\leq_\lo T^1$ can be interpreted as a stronger positive dependence among the components of $T^1$ than among the components of $T^2$. In the case $n=2$ and assuming identical marginals for $T^1$ and $T^2$, the lower orthant order coincides with the order of \textit{positive quadrant dependence} $\leq_{\rm PQD}$, \textit{PQD order} for short,
originally introduced in \cite{Lehmann1966} and sometimes called \textit{correlation order}, cf.\ \cite{Dhaene1996}, which is one of the most popular stochastic orders adopted to formalize the intuition of stronger positive dependence between two random variables.\ For instance, all measures of concordance in the sense of \cite{Scarsini84}, such as Spearman's rho and Kendall's tau, are consistent with PQD order, cf.\ \cite{Nelsencop}.\ The same is true for the Pearson correlation, cf.\ \cite[Lemma 2]{Lehmann1966}.\ For an overview and a more detailed discussion of these dependence concepts, we refer to \cite{Nelsencop}.

\begin{proposition}\label{mono1}
Let $0<\gamma_1\leq \gamma_2<1$. Then, $$X^{\gamma_2}\leq_\lo X^{\gamma_1}.$$ 
\end{proposition}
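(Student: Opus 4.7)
My plan is to use the explicit coupling provided by \eqref{eq: rap}: with $(V_1,\ldots,V_n,U)$ independent and uniform on $(0,1)$, I set $W_i^\gamma := \gamma V_i\mathds{1}_{\{U\leq\gamma\}} + U\mathds{1}_{\{U>\gamma\}}$, so that $X_i^\gamma = F_{\mu_i}^{-1}(W_i^\gamma)$. Since lower orthant order is a distributional notion, this allows me to compare $X^{\gamma_1}$ and $X^{\gamma_2}$ using the same underlying randomness. Exploiting the standard equivalence $F_{\mu_i}^{-1}(u) \leq a \iff u \leq F_{\mu_i}(a)$ for $u\in(0,1)$, the event $\{X^\gamma \leq a\}$ coincides with $\{W_i^\gamma \leq u_i\text{ for all }i\}$, where $u_i := F_{\mu_i}(a_i)\in[0,1]$. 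Hence the claim reduces to showing that, for every $u \in [0,1]^n$, the function
$$\Phi(\gamma;u) := \P\bigl(W_i^\gamma \leq u_i\text{ for all }i=1,\ldots,n\bigr)$$
is non-increasing in $\gamma \in (0,1)$.

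To obtain a closed form, I would condition on $U$. On $\{U \leq \gamma\}$, the variables $W_i^\gamma = \gamma V_i$ are independent of each other and of $U$, so the conditional probability of $\{W_i^\gamma \leq u_i\text{ all }i\}$ equals $\prod_{i=1}^n\bigl((u_i/\gamma)\wedge 1\bigr)$, independently of the value of $U$; on $\{U > \gamma\}$, all $W_i^\gamma$ coincide with $U$, so the same event reduces to $\{U \leq \min_i u_i\}$. Setting $m := \min_i u_i$, integrating over $U$ yields
$$\Phi(\gamma;u) = \gamma\prod_{i=1}^n\bigl((u_i/\gamma)\wedge 1\bigr) + (m - \gamma)^+.$$

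Finally, I would verify that $\gamma \mapsto \Phi(\gamma;u)$ is non-increasing on $(0,1)$ by a case analysis based on the order statistics $u_{(1)} \leq \ldots \leq u_{(n)}$ of $u$. With $k(\gamma) := \#\{i : u_i < \gamma\}$, direct computation shows $\Phi(\gamma;u) = u_{(1)}$ for $\gamma \leq u_{(2)}$, and $\Phi(\gamma;u) = \gamma^{1-k(\gamma)}\prod_{j=1}^{k(\gamma)} u_{(j)}$ otherwise; within each regime of constant $k$, the function is either constant or has derivative $(1-k)\gamma^{-k}\prod_{j=1}^{k} u_{(j)} \leq 0$, and the two adjacent expressions agree at each break-point $\gamma = u_{(j)}$, so $\Phi(\cdot;u)$ is continuous and non-increasing. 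The main obstacle is simply the piecewise bookkeeping across these regimes; conceptually, decreasing $\gamma$ enlarges the comonotonic portion $\{U > \gamma\}$ of the construction, thereby concentrating more mass on the diagonal tail and increasing the joint lower-tail probabilities.
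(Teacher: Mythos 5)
Your proof is correct and follows essentially the same route as the paper's: both reduce via the representation \eqref{eq: rap} to the uniform vector $U^\gamma$ (your $W^\gamma$), condition on $\{U\le\gamma\}$ versus $\{U>\gamma\}$ to get the joint distribution function in closed form, and conclude by monotonicity in $\gamma$. The only cosmetic difference is that you verify that $\gamma\mapsto\Phi(\gamma;u)$ is globally non-increasing via a derivative/break-point analysis, whereas the paper compares $\gamma_1$ and $\gamma_2$ directly using the pointwise inequality $\frac{a_i\wedge\gamma_1}{\gamma_1}\ge\frac{a_i\wedge\gamma_2}{\gamma_2}$ and the Fr\'echet upper bound in the case $a\in\prod_{i=1}^n(\gamma_1,1)$.
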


\begin{proof}
For $j=1,2$, let $U^{\gamma_j}$ be defined as the vector whose $i$-th element is  given by
\[
U^{\gamma_j}_i =\gamma_jV_i \mathds{1}_{\{U\leq\gamma_j\}} + U \mathds{1}_{\{U>\gamma_j\}},
\]
where $(V_1,V_2,...,V_n,U)$ are i.i.d.\ uniformly distributed on $(0,1).$

First, we show that $ \gamma_1\leq \gamma_2$ implies $U^{\gamma_2}\leq_\lo U^{\gamma_1}$.\ Since the marginal distributions of $U^{\gamma_1}$ and $U^{\gamma_2}$ are uniform on the interval $(0,1)$, we only have to show that 
$$
\P\big(U^{\gamma_2}\leq a\big)\leq \P\big(U^{\gamma_1}\leq a\big)\quad \text{for all }a\in (0,1)^n.
$$
If $a\in \prod^{n}_{i=1}(\gamma_1, 1)$, then
\begin{align*}
\P\big(U^{\gamma_1}\leq a\big)&=1-\P\bigg(\bigcup_{i=1}^n \big\{U_i^{\gamma_1}> a_i\big\}\bigg)=1-\P\bigg(\bigcup_{i=1}^n \big\{U> a_i\big\}\bigg)=\P\big((U,\ldots, U)\leq a\big)\\
&=\min_{i=1,\ldots, n} a_i\geq \P\big(U^{\gamma_2}\leq a\big).
\end{align*}
Now, let $a\in (0,1)^n \sm \prod^{n}_{i=1}(\gamma_1, 1)$. Then, there exists some $k\in \{1,\ldots, n\}$ with $a_k<\ga_1\leq \ga_2$. Now, let $j\in \{1,2\}$ and $\om\in \Om$ with $U^{\ga_j}(\om)\leq a$. Assume, towards a contradiction, there existed some $i\in \{1,\ldots, n\}$ with $U^{\ga_j}_i(\om)>\ga_j$. Then, $\ga_j<U_i^{\ga_j}(\om)=U(\om)=U^{\ga_j}_k(\om)\leq a_k$, which contradicts $a_k<\ga_1\leq \ga_2$. Hence,
$$\P(U^{\gamma_j}\leq a)=\P\bigg(\bigcap_{i=1}^{n} \big\{U_i^{\gamma_j}\leq a_i\wedge \gamma_j\big\} \bigg) \quad\text{for }j=1,2.$$ 
Moreover, by construction of $U^{\gamma_j}$,
$$\P\bigg(\bigcap_{i=1}^{n} \big\{U_i^{\ga_j}\leq a_i\wedge\gamma_j\big\}\cap \{U> \ga_j\} \bigg)=0 \quad\text{for }j=1,2,$$
which implies that
\begin{align*}
       \P\big(U^{\gamma_j}\leq a\big)&=  \P\bigg(\bigcap_{i=1}^{n} \big\{U_i^{\gamma_j}\leq a_i\wedge\gamma_j\big\}\cap \{U\leq \gamma_j\} \bigg)=\gamma_j  \prod_{i=1}^{n} \P\big(\gamma_j V_i\leq a_i\wedge \gamma_j \big)\\
       &=\gamma_j\prod_{i=1}^{n} \frac{a_i\wedge\gamma_j}{\gamma_j}=a_k\prod_{\substack{{i=1}\\{i\neq k}}}^n \frac{a_i\wedge\gamma_j}{\gamma_j}\quad \text{for }j=1,2.
   \end{align*}
    Since $\frac{a_i\wedge \ga_1}{\ga_1}\geq \frac{a_i\wedge \ga_2}{\ga_2}$ for all $i=1,\ldots, n$, it follows that
    \[
    \P\big(U^{\gamma_1}\leq a\big)= a_k\prod_{\substack{{i=1}\\{i\neq k}}}^n \frac{a_i\wedge\gamma_1}{\gamma_1}\geq a_k\prod_{\substack{{i=1}\\{i\neq k}}}^n \frac{a_i\wedge\gamma_2}{\gamma_2}= \P\big(U^{\gamma_2}\leq a\big).
    \]
  We have therefore shown that $U^{\ga_2}\leq_\lo U^{\ga_1}$.\ Finally, the statement $X^{\gamma_2}\leq_\lo X^{\gamma_1}$ follows from the observation that $X^{\gamma_j}_i=F^{-1}_{\mu_i}\big(U^{\gamma_j}_i \big)$ for $i=1,\ldots,n$ and $j=1,2$ together with the fact that the lower orthant order is preserved under nondecreasing transforms, cf.\ \cite[Theorem~6.G.3]{shaked2007stochastic}.
\end{proof}

\begin{corollary}\label{cor.lo.consistent}
    Let $\cM\subset \cP(\R)$ be nonempty, $\rh\colon \cpl^n(\cM)\to [-1,1]$ be a regular dependence measure consistent with lower orthant order, and  $0<\gamma_1\leq \gamma_2<1$. Then, $$\rh\big(X_1^{\gamma_2},\ldots, X_n^{\gamma_2}\big)\leq \rh\big(X_1^{\gamma_1},\ldots, X_n^{\gamma_1}\big).$$ 
\end{corollary}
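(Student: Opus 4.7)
The plan is to derive this corollary as a direct consequence of Proposition \ref{mono1} together with the definition of consistency with lower orthant order. First, I would note that the random vectors $X^{\gamma_1}$ and $X^{\gamma_2}$ share the same marginals $\mu_1,\ldots,\mu_n$, since by construction each coordinate has the form $F_{\mu_i}^{-1}(U_i^{\gamma_j})$ with $U_i^{\gamma_j}$ uniform on $(0,1)$. This means we are in the regime where lower orthant order is a genuine dependence comparison and where it makes sense to compare the values of a dependence measure on the two couplings.

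Next, I would invoke Proposition \ref{mono1}, which gives directly that $X^{\gamma_2}\leq_\lo X^{\gamma_1}$ whenever $\gamma_1\leq \gamma_2$. By the assumed consistency of $\rh$ with lower orthant order on couplings with identical marginals, this ordering translates into
\[
\rh\big(X_1^{\gamma_2},\ldots, X_n^{\gamma_2}\big)\leq \rh\big(X_1^{\gamma_1},\ldots, X_n^{\gamma_1}\big),
\]
which is exactly the claim. Regularity of $\rh$ does not play any role in the proof itself; it is only present because the corollary is stated in the framework of regular dependence measures introduced earlier.

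Since the argument is essentially a one-line chain ``Proposition \ref{mono1} + consistency with $\leq_\lo$'', there is no real obstacle. The only subtlety worth flagging explicitly is that consistency with lower orthant order, as discussed in the paragraph before Proposition \ref{mono1}, is the appropriate notion when marginals coincide, which is automatic here; this is what allows us to pass from the stochastic comparison of the random vectors $X^{\gamma_2}\leq_\lo X^{\gamma_1}$ to the numerical comparison of the dependence values $\rh(X^{\gamma_2})\leq \rh(X^{\gamma_1})$ without any further assumption on $\rh$ beyond the one stated in the hypothesis.
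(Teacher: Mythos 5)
Your argument is exactly the intended one: the paper states this corollary without proof precisely because it is the immediate combination of Proposition \ref{mono1} (giving $X^{\gamma_2}\leq_\lo X^{\gamma_1}$), the identical marginals of the two vectors, and the assumed consistency of $\rh$ with lower orthant order. Your observation that regularity of $\rh$ is not used is also correct.
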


As mentioned before, most dependence measures, such as the Pearson correlation, Spearman's rho, and Kendall's tau, are consistent with the PQD order.\ We have the following corollary.

\begin{corollary}\label{cor.pqd.consistent}
    Let $\cM\subset \cP(\R)$ be nonempty, $n\geq 2$, $\rh_{ij}\colon \cpl^2(\cM)\to [-1,1]$ be regular bivariate dependence measures consistent with PQD order for $i,j=1,\ldots, n$ with $i\neq j$, $D\colon [-1,1]^{n(n-1)}\to [-1,1]$ as in Example \ref{ex.abs.dep.meas} c), $\rh\colon \cpl^n(\cM)\to [-1,1]$ be given by \eqref{eq.aggregation}, and  $0<\gamma_1\leq \gamma_2<1$. Then, $$\rh\big(X_1^{\gamma_2},\ldots, X_n^{\gamma_2}\big)\leq \rh\big(X_1^{\gamma_1},\ldots, X_n^{\gamma_1}\big).$$ 
\end{corollary}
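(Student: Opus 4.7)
The plan is to reduce Corollary~\ref{cor.pqd.consistent} to Proposition~\ref{mono1} by passing the lower orthant ordering from the full $n$-dimensional vectors down to their bivariate marginals, where it coincides with PQD order thanks to matching one-dimensional laws.

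First, I would invoke Proposition~\ref{mono1} to obtain $X^{\gamma_2}\leq_\lo X^{\gamma_1}$, i.e., $\P\big(X^{\gamma_2}\leq a\big)\leq \P\big(X^{\gamma_1}\leq a\big)$ for every $a\in\R^n$. Next, for fixed $i\neq j$, I would specialize this inequality to vectors of the form $a^{(N)}\in\R^n$ with $a^{(N)}_i=s$, $a^{(N)}_j=t$ and $a^{(N)}_k=N$ for $k\notin\{i,j\}$, and let $N\to\infty$. Monotone convergence of probabilities then yields
\[
\P\big((X^{\gamma_2}_i,X^{\gamma_2}_j)\leq (s,t)\big)\leq \P\big((X^{\gamma_1}_i,X^{\gamma_1}_j)\leq (s,t)\big)\quad\text{for all }(s,t)\in\R^2,
\]
so that the bivariate projections satisfy $(X^{\gamma_2})_{ij}\leq_\lo (X^{\gamma_1})_{ij}$ in the sense of $\R^2$-valued vectors.

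Second, since $X^{\gamma_1}_i\sim \mu_i\sim X^{\gamma_2}_i$ and likewise for the $j$-th coordinate by construction, the two bivariate couplings $(X^{\gamma_1})_{ij}$ and $(X^{\gamma_2})_{ij}$ share identical marginals. As recalled in the text just before Proposition~\ref{mono1}, for bivariate vectors with identical marginals the lower orthant order coincides with the PQD order, hence
\[
(X^{\gamma_2})_{ij}\leq_{\rm PQD} (X^{\gamma_1})_{ij}.
\]
By the assumed consistency of each $\rh_{ij}$ with the PQD order, this implies $\rh_{ij}\big((X^{\gamma_2})_{ij}\big)\leq \rh_{ij}\big((X^{\gamma_1})_{ij}\big)$ for every pair $i\neq j$.

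Finally, since $D\colon[-1,1]^{n(n-1)}\to[-1,1]$ is nondecreasing, applying $D$ coordinate-wise to these inequalities gives
\[
\rh(X^{\gamma_2})=D\Big(\big(\rh_{ij}((X^{\gamma_2})_{ij})\big)_{i\neq j}\Big)\leq D\Big(\big(\rh_{ij}((X^{\gamma_1})_{ij})\big)_{i\neq j}\Big)=\rh(X^{\gamma_1}),
\]
which is the desired inequality. There is no real obstacle here; the only point that deserves attention is the passage from $n$-dimensional lower orthant order to bivariate marginal lower orthant order, which is handled by the routine limit argument described above. Everything else is a direct application of Proposition~\ref{mono1}, the assumed PQD consistency, and monotonicity of $D$.
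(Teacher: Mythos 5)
Your proof is correct and follows essentially the same route as the paper's: apply Proposition~\ref{mono1}, project the lower orthant order down to the bivariate marginals by fixing coordinates $i,j$ and sending the remaining ones to infinity (continuity from below), identify this with PQD order via the identical marginals, invoke PQD consistency of each $\rh_{ij}$, and conclude by monotonicity of $D$. No gaps.
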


\begin{proof}
      By construction, $X^{\gamma_1}$ and $X^{\gamma_2}$ have identical marginals, see \eqref{eq: rap} or property (i) in Theorem \ref{th: main result 1}. From Proposition \ref{mono1}, we know that $X^{\gamma_2}\leq_\lo X^{\gamma_1}$.\ Hence, using the continuity from below of $\P$, for $a\in \R^n$ and $i,j=1,\ldots, n$ with $i\neq j$, we have
        \begin{align*}
    \P\Big(\big\{X_i^{\gamma_2}\leq a_i\big\}\cap \big\{X_j^{\gamma_2}\leq a_j\big\}\Big) &= \lim_{k\to \infty}\P\big(X^{\gamma_2}\leq a^k\big)\leq\lim_{k\to \infty}\P\big(X^{\gamma_1}\leq a^k\big)\\
    &=\P\Big(\big\{X_i^{\gamma_1}\leq a_i\big\}\cap \big\{ X_j^{\gamma_1}\leq a_j\big\}\Big) 
      \end{align*}
      with $(a^k)_{k\in \N}\subset \R^n$ satisfying $a_i^k=a_i$, $a_j^k=a_j$, and $a^k_l\to \infty$ for $l =1,\ldots, n$ with $l\neq i,j$,
      so that   $$(X_i^{\gamma_2},X_j^{\gamma_2})\leq_{\rm PQD}(X_i^{\gamma_1},X_j^{\gamma_1})\quad \text{for all }i,j=1,\ldots,n\text{ with }i\neq j.$$ 
    The consistency of $\rh_{ij}$ with PQD order thus yields that $$\rh_{ij}(X_i^{\gamma_2},X_j^{\gamma_2})\leq \rh_{ij}(X_i^{\gamma_1},X_j^{\gamma_1})\quad \text{for all }i,j=1,\ldots,n\text{ with }i\neq j,$$
    and the statement then follows from the fact that $D$ is nondecreasing.
\end{proof}

\begin{remark}\label{rem: copula}
    If the marginal distributions of $X^\gamma$ are continuous, by Sklar's Theorem, the dependence structure among the elements of $X^\gamma$ can be represented using the unique copula of the random vector.\
    Using the construction in \eqref{eq: rap}, the copula of a vector $X^\gamma$ can retrieved from the joint distributions of the random variables
     \begin{equation}\label{eq: rapunif}
       U^\gamma_i=\gamma V_i \mathds{1}_{\{U\leq\gamma\}} + U \mathds{1}_{\{U>\gamma\}}
   \end{equation}
     with $(V_1,V_2,...,V_n,U)$  vector of independent random variables with uniform distribution on $[0,1]$. Let $c^\gamma$ denote the joint distribution of $(U^\gamma_1,\ldots, U^\gamma_n)$.\ Along the lines of the proof of Proposition~\ref{mono1}, we have shown that, for $a\in (0,1)^n$,
     \begin{equation}\label{eq:copgamma}
        c^\gamma(a)= P(U^{\gamma}\leq a)=\begin{cases}
            \min_{i=1,\ldots,n}{a_i}, \text{ for } a\in \prod^{n}_{i=1}(\gamma, 1), \\
\ga \prod_{i=1}^{n}\frac{a_i\wedge \gamma}{\gamma}, \text{ for }  a\in (0,1)^n \sm \prod^{n}_{i=1}(\gamma, 1).
        \end{cases}
     \end{equation}
Observe that Theorem~\ref{th: main result 1} can then be reformulated in terms of copulas by saying that, for every $\delta\in (0,1)$ and all $\mu_1,\ldots, \mu_n\in \cM$ with continuous distribution functions, there exists some $\ga\in(0,1)$ and an $n$-dimensional random vector $X^\gamma$ with a copula as in \eqref{eq:copgamma} satisfying $\rh(X_1^\ga,\ldots, X_n^\ga)\leq \delta$ and $ X_i^\gamma \sim \mu_i$ for $i=1,\ldots,n$.

\end{remark}

Theorem \ref{th: main result 1} shows that partial knowledge on multiple dependence measures is compatible with an upper comonotonic dependence structure indexed by a parameter $\gamma\in(0,1)$, in which $1-\gamma$ describes the probability of observing a common tail event.\  Nonetheless, one can still wonder, which specific values for $\gamma\in (0,1)$ are compatible with the fact that a regular dependence measure is less or equal than $\delta\in (0,1)$.\ Proposition \ref{mono1} and Corollary \ref{cor.lo.consistent} give a first insight by stating that  there exist a decreasing relationship between $\gamma\in (0,1)$ and $\rh(X^\gamma)$ for regular dependence measures that are consistent with lower orthant order. The following example  provides an answer to this question for the Pearson correlation, cf.\ Example \ref{ex.dep.meas} a), Spearman's rho, cf.\ Example \ref{ex.dep.meas} b), and Kendall's tau, cf.\ Example \ref{ex.dep.meas} c).

\begin{example} \label{ex. correlations}\
\begin{enumerate}
    \item[a)]  \textbf{Pearson correlation.} We start by computing the Pearson correlation in the case of an identical distribution, i.e., $\mu_i=\mu\in \cP(\R)$ with $\int_\R x^2\,\mu(\d x)<\infty$.\ Let $i,j\in\{1,\ldots, n\}$ with $i\neq j$ and $\ga\in (0,1)$.\ Following the first three lines of the computation in \eqref{eq.coupling.xgamma}, for $\ga\in (0,1)$, we find that
    \[
    \E \big(X_i^{\ga}X^\ga_j\big)=\frac1\ga \bigg(\int_0^\ga F_\mu^{-1}(u)\,\d u\bigg)^2+\int_\ga^1 \big(F_\mu^{-1}(u)\big)^2\,\d u.
    \]
    Hence, the condition $\cor(X_i^\ga,X_j^\ga)= \de\in (0,1)$ is equivalent to the equality
\begin{equation}\label{eq.pearson.de.ga}
     \frac1\ga \bigg(\int_0^\ga F_\mu^{-1}(u)\,\d u\bigg)^2+\int_\ga^1 \big(F_\mu^{-1}(u)\big)^2\,\d u -\bigg(\int_\R x\,\mu(\d x)\bigg)^2= \de \var(\mu),
\end{equation}
    where, by Corollary \ref{cor.pqd.consistent}, the left-hand side is decreasing in $\ga$. Hence, by Jensen's inequality and the intermediate value theorem, for each $\de\in (0,1)$, the set $I_\de$ of all $\ga\in (0,1)$, such that \eqref{eq.pearson.de.ga} holds, is a nonempty closed interval. In particular, $\min I_\de$ is the smallest $\ga\in (0,1)$ such that \eqref{eq.pearson.de.ga} is satisfied for $\de\in (0,1)$.\ On the other hand, unless $\mu$ is a Dirac, for each $\ga\in (0,1)$, there exists at most one $\de\in (0,1)$ such that \eqref{eq.pearson.de.ga} holds. Observe that the equation \eqref{eq.pearson.de.ga} for $\ga$ and $\de$ is independent of $n$.
    
    In the sequel, we present two examples, where \eqref{eq.pearson.de.ga} can be solved explicitly. Moreover, the case, where $\mu$ is a uniform distribution on (0,1), is discussed in part b), below.
    \begin{enumerate}
    \item[(i)] Let $\de,\ga\in (0,1)$ and $\mu:=B(1,p)$ be a Bernoulli distribution with $p\in (0,1)$. Then, $F_\mu^{-1}=\eins_{[1-p,1)}$. If $\ga\leq 1-p$, then \eqref{eq.pearson.de.ga} does not admit a solution. Hence, we only consider the case $\ga>1-p$. Then,
    \eqref{eq.pearson.de.ga} simplifies to
    \[
     \frac1\ga (1-\ga-p)^2+(1-\ga)-p^2=\de p(1-p),
    \]
    which gives
    $$\de=\frac{(1-p)(1-\ga)}{p\ga}\quad \text{or, equivalently,}\quad \ga=\frac{1}{1+\de \frac{p}{1-p}}.$$
 Performing similar computations in the general case, where $\mu_i=B(1,p_i)$ with $p_i\in (0,1)$ for $i=1,\ldots, n$, one finds that
 \[
\cor\big(X_i^\ga,X_j^\ga\big)\leq\de\in (0,1) \quad\text{for }i,j=1,\ldots, n\text{ with }i\neq j
 \]
if
\[
 \frac{1}{1+\de \cdot \frac{p_{\min}}{1-p_{\min}}}\leq \ga<1
\]
with $p_{\min}:=\min\{p_1,\ldots, p_n\}$.
\begin{figure}[htb]
\includegraphics[width=10cm ]{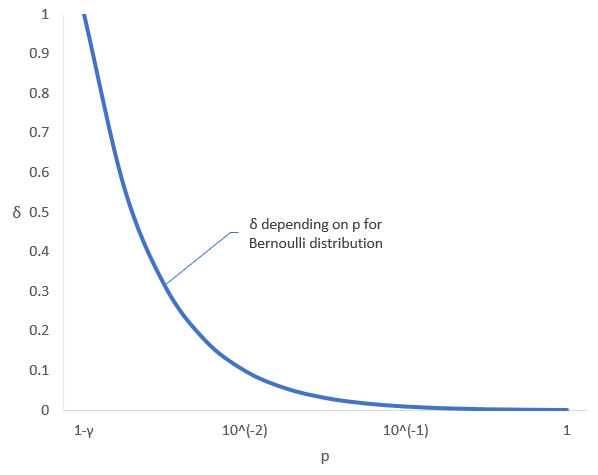}
\caption{$\de$ in dependence of $p$ for $\mu \sim B(1,p)$ and  fixed $\ga=0.999$}
\end{figure}
   \item[(ii)] Now, assume that $\mu={\rm Exp}(\la)$ with $\la>0$, so that $F_\mu^{-1}(u)=-\frac{\ln(1-u)}\la$ for $u\in (0,1)$. In this case, for $\de,\ga\in (0,1)$, \eqref{eq.pearson.de.ga} becomes
   \[
   \frac1{\ga \la^2} \bigg(\int_0^\ga \ln(1-u)\,\d u\bigg)^2+\frac1{\la^2} \int_\ga^1 \big(\ln(1-u)\big)^2\,\d u-\frac1{\la^2}=\frac\de{\la^2}.
   \]
    Dividing by $\la^2$, we find that
    \[
    \frac1\ga \bigg(\int_0^\ga \ln(1-u)\,\d u\bigg)^2+\int_\ga^1 \big(\ln(1-u)\big)^2\,\d u-1=\de,
    \]
    which leads to 
    \[
    \de=(1-\ga)\Bigg(1+\frac{\big(\ln(1-\ga)\big)^2}{\ga}\Bigg).
    \]
    Note that $\de\in (0,1)$ is the same for all $\la>0$.
\begin{figure}[htb]
\includegraphics[width=10cm]{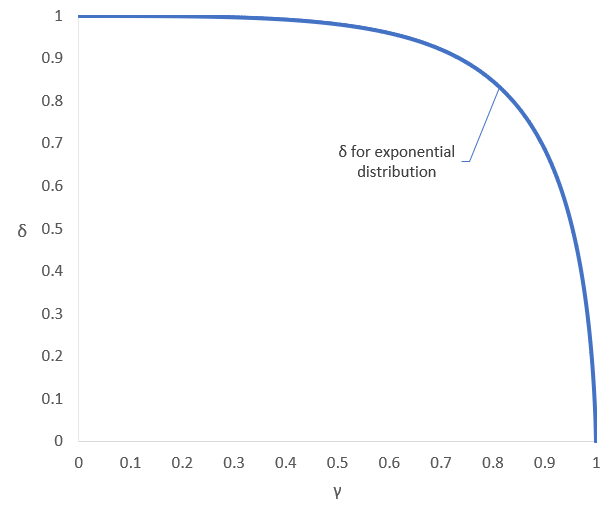}
\caption{$\de$ in dependence of $\ga$ for $\mu \sim {\rm Exp}(\la)$ with arbitrary $\la>0$}
\end{figure}
    
    \end{enumerate}
    
    \item[b)]  \textbf{Spearman's rho.} Let $i,j\in \{1,\ldots, n\}$ with $i\neq j$, and $X^\ga$ as in Remark \ref{rem.rap} for $\ga\in (0,1)$. We now compute the Spearman's rho between $X^\ga_i$ $X^\ga_j$, which is given by $\spear\big(X^\gamma_i,X^\gamma_j\big)= \cor\big(U^\ga_i,U^\ga_j\big)$ with $U^\ga_i$ and $U^\ga_j$ given by \eqref{eq: rapunif}.\ Using part a), we find that
    \[
     \E\big(U^\ga_i U^\ga_j\big)=\frac1\ga \bigg(\int_0^\ga u\,\d u\bigg)^2+\int_\ga^1 u^2\,\d u =\frac\ga4+\frac{1-\ga}3.
    \]
     Hence,
     \[
     \spear (X^\gamma_i,X^\gamma_j)=12\bigg(\frac\ga4+\frac{1-\ga}3-\frac14\bigg)=1-\ga^3.
    \]
Going back to Theorem~\ref{th: main result 1}, we find that $\spear\big(X^\gamma_i,X^\gamma_j\big)= \de$ if and only if $\gamma= (1-\de)^\frac{1}{3}$ for $\de\in (0,1)$, independently of $n$.
\item[c)] \textbf{Kendall's tau.} Consider the same setup as in part b).\ We proceed in a similar manner, using the representation
\begin{align*}
    \tau\big(X^\ga_i,X^\ga_j\big)=4\E\Big(c^\ga \big(U^\ga_i,U^\ga_j\big)\Big)-1
\end{align*}
with $c^\ga$ explicitly given in \eqref{eq:copgamma}.\ Then,
\begin{align*}
    \E\Big(c^\ga \big(U^\ga_i,U^\ga_j\big)\Big)&= \E\Big(c^\ga \big(U^\ga_i,U^\ga_j\big)\eins_{\{U\leq \ga \}}\Big)+\E\Big(c^\ga \big(U^\ga_i,U^\ga_j\big)\eins_{\{U> \ga \}}\Big)\\
    &= \E\Big(c^\ga\big(\ga V_i ,\ga V_j\big) \eins_{\{U\leq \ga \}}\Big)+\E\big(c^\ga(U,U)\eins_{\{U> \ga\}}\big)\\
     &= \ga\E\big(V_iV_j\eins_{\{U\leq \ga \}}\big)+\E\big(\min\{U,U\}\eins_{\{U> \ga\}}\big)\\
   &= \ga^2\E(V_i)\E(V_j)+\E\left(U\eins_{\{U> \ga\}}\right)=\frac{\ga^2}{4}+\frac{1-\ga^2}{2}.
\end{align*}
Thus, 
\[
   \tau(X^\ga_i,X^\ga_i)=1-\ga^2.
\]
In particular, $\tau(X^\ga_i,X^\ga_i)=\de$ if and only if $\ga =(1-\de)^{\frac12}$ for $\de\in (0,1)$, independently of $n$.
\end{enumerate}

\end{example}

\section{Risk aggregation and tail risk measures} \label{se: tail risk measures}

In this section, we put the first main result, Theorem \ref{th: main result 1}, in the context of so-called tail risk measures as introduced in \cite{liu2021theory}. Throughout, we work in the setup of Theorem \ref{th: main result 1}.\ That is, we consider a probability space $(\Om,\cF,\P)$, a uniformly distributed random variable $U\colon \Om\to (0,1)$, a nonempty set $\cM\subset \cP(\R)$, and a regular dependence measure $\rh\colon \cpl^n(\cM)\to [-1,1]$.\ Then, for $n\in \N$, $\de>0$ and $\mu_1,\ldots, \mu_n\in \cM$, let $\ga\in (0,1)$ and $X_1,\ldots, X_n$ be random variables on $\Om$ such that (i) - (iv) in Theorem \ref{th: main result 1} is satisfied.\ As before, we use the notation $X=(X_1,\ldots, X_n)$ and $Y=(Y_1,\ldots, Y_n)$ with $Y_i:=F_{\mu_i}^{-1}(U)$ for $i=1,\ldots, n$.\ Note that $Y$ is a comonotonic vector, i.e., it consists of perfectly correlated random variables, that has the same marginals as $X$.\ We point out that comonotonicity is often seen as the strongest possible form of positive dependence and describes a complete lack of diversification benefits, cf.\ \cite{DHAENE20023} for a detailed discussion of this concept.

Moreover, let $C^\ga:=\{U\leq \ga\}$ and $\P^\ga\colon \cF\cap C^\ga\to [0,1]$ be given by
\[
\P^\ga(A):=\frac{\P(A)}{\ga}\quad\text{for all }A\in \cF\cap C^\ga.
\]
We define $Z_i\colon C^\ga\to \R$ by $Z_i(\om):=X_i(\om)$ for all $\om\in C^\ga$ and $i=1,\ldots, n$. Again, we use the notation $Z=(Z_1,\ldots,Z_n)$.

Let $f\colon \R^n \to \R$ be a nondecreasing and left-continuous function, i.e.,
\[
f(x)\leq f(y)\quad \text{for all }x,y\in \R^n\text{ with }x \leq y
\]
and $f(x)=\lim_{y\uparrow x}f(y)$ for all $x\in \R^n$. The aim of this section is to analyze the value at risk of $f(X)$. In particular, we show that $f(X)$ and $f(Y)$ exhibit the same tail risk starting from the risk threshold $\ga$, i.e., for all $\alpha \in [\ga,1)$
\[
\VaR_\P^\alpha(f(X))= \VaR_\P^\alpha(f(Y)).
\]
In this context, we define 
\[
q(\al):= f \big(F^{-1}_{\mu_1}(\alpha),\ldots,F^{-1}_{\mu_n}(\alpha) \big)\in \R.
\]
Observe that the map $(0,1)\to \R,\;\al\mapsto q(\al)$ is nondecreasing and left-continuous, since $f$ and $F_{\mu_i}^{-1}$ for $i=1,\ldots, n$ are nondecreasing and left-continuous.\
We start with the following three auxiliary results. 
\begin{lemma} \label{lemma 1}
Let $a\in \R$ with $a\geq q({\ga})$. Then, $\P \big(\{f(X)>a \} \cap \{U\leq \ga \} \big)=0$.
\end{lemma}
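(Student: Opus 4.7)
The plan is very short because the conclusion is essentially forced by property (iv) of Theorem~\ref{th: main result 1} together with the componentwise monotonicity of $f$. First, I would unpack property (iv): for each $i=1,\ldots,n$, the event $\bigl\{X_i>F_{\mu_i}^{-1}(\ga)\bigr\}\cap\{U\leq\ga\}$ is $\P$-null. Taking the union over $i=1,\ldots,n$, I obtain that, $\P$-almost surely on $\{U\leq\ga\}$,
\[
X_i\leq F_{\mu_i}^{-1}(\ga)\quad\text{for all }i=1,\ldots,n.
\]

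Next, I would apply the monotonicity of $f$ pointwise on this almost sure event. Since $f\colon\R^n\to\R$ is nondecreasing in the componentwise order, the inequality above yields
\[
f(X_1,\ldots,X_n)\leq f\bigl(F_{\mu_1}^{-1}(\ga),\ldots,F_{\mu_n}^{-1}(\ga)\bigr)=q(\ga)\leq a
\]
$\P$-a.s.\ on $\{U\leq\ga\}$, where the last inequality is the hypothesis $a\geq q(\ga)$. Consequently, the event $\{f(X)>a\}\cap\{U\leq\ga\}$ is contained (up to a $\P$-null set) in the empty set, and therefore has probability zero.

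There is no real obstacle here: neither the conditional independence on $\{U\leq\ga\}$ nor the left-continuity of $f$ is needed for this particular statement; only the componentwise monotonicity of $f$ and the upper-truncation property (iv) enter. Left-continuity of $f$ (and of the $F_{\mu_i}^{-1}$) will play a role later when one wants to go from this one-sided bound to the equality $\VaR_\P^\al(f(X))=\VaR_\P^\al(f(Y))$ for $\al\in[\ga,1)$, but it is not required for the present lemma.
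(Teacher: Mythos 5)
Your proof is correct and is essentially the paper's argument in contrapositive form: the paper observes that $f(X(\om))>a\geq q(\ga)$ forces $X_i(\om)>F_{\mu_i}^{-1}(\ga)$ for some $i$ and then applies a union bound with property (iv), which is exactly the same combination of componentwise monotonicity of $f$ and the $n$ null events that you use. Your closing remark that neither conditional independence nor left-continuity is needed here is also accurate.
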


\begin{proof}
Let $\om\in \Om$ with $f\big(X(\om)\big)>a\geq q({\ga}).$ By definition of $q(\ga)$ and since $f$ is nondecreasing, there exists some $i\in \{1,\ldots,n\}$ such that $X_i(\om) > F^{-1}_{\mu_i}(\ga)$.\ Hence, using property (iv) in Theorem \ref{th: main result 1}, we can conclude that
\begin{align*}
\P \big(\{f(X)>a \} \cap \{U\leq \ga \} \big) &\leq \P \Bigg( \bigcup_{i=1}^{n}\{X_i > F^{-1}_{\mu_i}(\ga) \} \cap \{U\leq \ga \} \Bigg) \\
&\leq \sum_{i=1}^n \P \big( \{X_i > F^{-1}_{\mu_i}(\ga) \} \cap \{U\leq \ga \} \big)=0.
\end{align*}
\end{proof}

\begin{lemma} \label{lemma 2} 
For all $\al\in (0,1)$,
\[
\VaR_\P^\al\big(f(Y)\big)=q(\al)
\]
\end{lemma}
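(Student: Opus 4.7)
The starting observation is that, by definition, $Y_i = F_{\mu_i}^{-1}(U)$ for all $i=1,\ldots,n$, so
\[
f(Y) = f\big(F_{\mu_1}^{-1}(U),\ldots,F_{\mu_n}^{-1}(U)\big) = q(U).
\]
Since $f$ is nondecreasing and left-continuous and each $F_{\mu_i}^{-1}$ is nondecreasing and left-continuous on $(0,1)$, the composition $q\colon (0,1)\to\R$ is itself nondecreasing and left-continuous (this was already noted right before the statement of the lemma). The claim therefore reduces to the purely one-dimensional fact that for any nondecreasing, left-continuous $q\colon(0,1)\to\R$ and uniform $U$,
\[
\VaR_\P^\al\big(q(U)\big) = q(\al) \quad \text{for all } \al\in(0,1).
\]

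I would establish this by proving the two matching inequalities directly from the definition $\VaR_\P^\al(q(U)) = \inf\{a\in\R\,|\,\P(q(U)>a)\leq 1-\al\}$. For the upper bound $\VaR_\P^\al(q(U))\leq q(\al)$, I would use the monotonicity of $q$ to obtain the inclusion $\{q(U)>q(\al)\}\subseteq\{U>\al\}$, because $U\leq \al$ forces $q(U)\leq q(\al)$. Taking probabilities gives $\P(q(U)>q(\al))\leq 1-\al$, so $q(\al)$ belongs to the set over which the infimum is taken.

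For the lower bound $\VaR_\P^\al(q(U))\geq q(\al)$, the key tool is the left-continuity of $q$ at $\al$: for any $a<q(\al)=\sup_{\al'<\al} q(\al')$, there exists $\al'\in(0,\al)$ with $q(\al')>a$. Monotonicity then yields $\{U\geq \al'\}\subseteq\{q(U)\geq q(\al')\}\subseteq\{q(U)>a\}$, and since $U$ is uniform, $\P(q(U)>a)\geq 1-\al'>1-\al$. Hence every $a<q(\al)$ fails the threshold condition, so $\VaR_\P^\al(q(U))\geq q(\al)$.

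There is no genuine obstacle here; the only point requiring minor care is making sure that left-continuity (rather than right-continuity) of $q$ is the relevant property, which matches the convention $\VaR_\P^\al=F_\nu^{-1}(\al)$ fixed in the notation section. Combining the two inequalities finishes the proof.
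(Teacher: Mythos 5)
Your proposal is correct and follows essentially the same route as the paper: the upper bound via the inclusion $\{f(Y)>q(\al)\}\subseteq\{U>\al\}$ from monotonicity, and the lower bound via left-continuity of $q$ to produce some $\be<\al$ with $a<q(\be)$, giving $\P(f(Y)>a)\geq 1-\be>1-\al$. The explicit reduction to the one-dimensional statement $f(Y)=q(U)$ is a nice way to phrase it, but the argument is the same.
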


\begin{proof}
Let $\al\in (0,1)$ and observe that $f\big(Y(\om)\big)> q(\al)$ implies that $U(\om)>\al$ for all $\om\in \Om$ by monotonicity of $f$. Hence,
\[
\P\big(f(Y)> q(\al)\big)\leq \P(U>\al)\leq 1-\al,
\]
so that $\VaR_\P^\al\big(f(Y)\big)\leq q(\al)$.\ Now, let $a\in \R$ with $a<q(\al)$.\ Then, by left-continuity of $q$, there exists some $\be\in (0,\al)$ with $a<q(\be)\leq q(\al)$. Hence,
\[
1-\al<1-\be=\P(U\geq \be)\leq \P\big(f(Y)\geq q(\be)\big)\leq \P\big(f(Y)>a\big),
\]
which shows that $\VaR_\P^\al\big(f(Y)\big)= q(\al)$.
\end{proof}

\begin{lemma}\label{lem.var.small}
For all $\al\in (0,1)$,
\[
\VaR_{\P^\ga}^{\al}\big(f(Z)\big)=\VaR_\P^{\al\ga}\big(f(X)\big).
\]
Moreover,
\[
\lim_{\al\uparrow 1}\VaR_{\P^\ga}^{\al}\big(f(Z)\big)=q(\ga).
\]
\end{lemma}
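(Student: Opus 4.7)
My plan is to prove the two claims by a careful bookkeeping of the event $C^\ga$ versus its complement, together with the characterisation
\[
\VaR^\al_{\mathbb{Q}}(W)\leq a \iff \mathbb{Q}(W>a)\leq 1-\al.
\]

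For the first equality, I would show that the two level sets $\{a\in\R : \P(f(X)>a)\leq 1-\al\ga\}$ and $\{a\in\R : \P^\ga(f(Z)>a)\leq 1-\al\}$ coincide; the equality of VaRs then follows by taking infima. To compare these, decompose
\[
\P\big(f(X)>a\big)=\P\big(\{f(X)>a\}\cap C^\ga\big)+\P\big(\{f(X)>a\}\cap (\Om\sm C^\ga)\big),
\]
and note that on $\Om\sm C^\ga$ property~(iv) of Theorem~\ref{th: main result 1} gives $f(X)=f(Y)=q(U)$. For $a\geq q(\ga)$ I would invoke Lemma~\ref{lemma 1} to eliminate the first summand and bound the second by $1-\ga\leq 1-\al\ga$, so that both conditions are trivially satisfied. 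For $a<q(\ga)$, monotonicity of $q$ gives $q(U)\geq q(\ga)>a$ on $\{U>\ga\}$, hence $\P(\{f(X)>a\}\cap(\Om\sm C^\ga))=1-\ga$. The equivalence $\P(f(X)>a)\leq 1-\al\ga \iff \P(\{f(X)>a\}\cap C^\ga)\leq \ga(1-\al)\iff \P^\ga(f(Z)>a)\leq 1-\al$ is then immediate.

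For the second claim, the first part together with left-continuity of the map $\beta\mapsto \VaR^\beta_\P(f(X))$ yields
\[
\lim_{\al\uparrow 1}\VaR^\al_{\P^\ga}\big(f(Z)\big)=\lim_{\al\uparrow 1}\VaR^{\al\ga}_\P\big(f(X)\big)=\VaR^\ga_\P\big(f(X)\big),
\]
so the claim reduces to $\VaR^\ga_\P(f(X))=q(\ga)$. The upper bound is a direct consequence of Lemma~\ref{lemma 1}, since $\P(f(X)\leq q(\ga))\geq \P(C^\ga)=\ga$. The lower bound requires showing $\P(f(X)\leq a)<\ga$ for every $a<q(\ga)$, and by the same decomposition as above, this reduces to $\P^\ga(f(Z)>a)>0$.

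The key step here, which I expect to be the main obstacle, is exploiting left-continuity of $f$ together with the conditional independence of $Z_1,\dots,Z_n$ under $\P^\ga$ from property~(iii). Given $a<q(\ga)$, set $\ep:=q(\ga)-a>0$; by left-continuity of $f$ at $\big(F_{\mu_1}^{-1}(\ga),\dots,F_{\mu_n}^{-1}(\ga)\big)$, choose $\de>0$ such that $f(y)>a$ for all $y$ in the box $B_\de:=\prod_{i=1}^n \big(F_{\mu_i}^{-1}(\ga)-\de,F_{\mu_i}^{-1}(\ga)\big]$. Since $F_{\mu_i}\big(F_{\mu_i}^{-1}(\ga)-\de\big)<\ga$ by definition of the left-continuous inverse, the marginal computation $\P^\ga(Z_i\leq b)=\min\{F_{\mu_i}(b)/\ga,1\}$ derived from property~(iv) gives $\P^\ga\big(Z_i\in (F_{\mu_i}^{-1}(\ga)-\de,F_{\mu_i}^{-1}(\ga)]\big)>0$ for every $i$; independence then yields $\P^\ga(Z\in B_\de)>0$, and hence $\P^\ga(f(Z)>a)>0$. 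This gives $\VaR^\ga_\P(f(X))\geq a$ for every $a<q(\ga)$, completing the proof.
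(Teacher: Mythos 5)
Your proof is correct and follows essentially the same route as the paper's: both rest on the decomposition over $\{U\leq \ga\}$ versus $\{U>\ga\}$, the monotonicity implication $f(X)\leq a\Rightarrow U\leq \ga$ for $a<q(\ga)$ (together with Lemma~\ref{lemma 1} for $a\geq q(\ga)$), and the conditional independence of $Z_1,\ldots,Z_n$ under $\P^\ga$ to show that $\P^\ga\big(f(Z)>a\big)>0$ for every $a<q(\ga)$. The only cosmetic difference is that you exploit left-continuity of $f$ at $\big(F_{\mu_1}^{-1}(\ga),\ldots,F_{\mu_n}^{-1}(\ga)\big)$ via a small box whose coordinates each carry positive $\P^\ga$-mass, whereas the paper uses left-continuity of the composed quantile $q$ and the events $\big\{Z_i\geq F_{\mu_i}^{-1}(\be)\big\}$ with $\be<\ga$ — the same idea in a different guise.
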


\begin{proof}
Let $\al\in (0,1)$. Then, by defintion of $Z$ and $\P^\ga$, for all $a\in \R$,
\[
\P\big(f(X)\leq a\big)\geq \P\big(\{f(X)\leq a\}\cap \{U\leq \ga\}\big) =\ga \P^\ga\big(f(Z)\leq a\big),
\]
which shows that $$\VaR_\P^{\al\ga}\big(f(X)\big)\leq \VaR_{\P^\ga}^{\al}\big(f(Z)\big).$$
Since, by property (iv) in Theorem \ref{th: main result 1}, 
$$\P^\ga\big(Z_i\leq F_{\mu_i}^{-1}(\ga)\big)=1\quad\text{for }i=1,\ldots,n$$
 and $f$ is nondecreasing, it follows that $$\VaR_{\P^\ga}^{\al}\big(f(Z)\big)\leq q(\ga).$$
Now, let $a\in \R$ with $a<\VaR_{\P^\ga}^{\al}\big(f(Z)\big)$.\ 
Then, $a<q(\ga)$, so that, by monotonicity of $f$, $f\big(X(\om)\big)\leq a$ implies that $U(\om)\leq \ga$ for all $\om\in \Om$. Hence,
\[
\P\big(f(X)\leq a\big)=\P\big(\{f(X)\leq a\}\cap\{U\leq \ga\}\big)=\ga \P^\ga\big(f(Z)\leq a\big)<\al \ga,
\]
which implies that $a\leq \VaR_{\P}^{\al\ga}\big(f(X)\big)$ and, consequently, $\VaR_{\P}^{\al\ga}\big(f(X)\big)\geq \VaR_{\P^\ga}^{\al}\big(f(Z)\big)$. It remains to show that $q(\ga)\leq \lim_{\al\uparrow 1}\VaR_{\P^\ga}^{\al}\big(f(Z)\big).$ To that end, observe that
\[
\lim_{\al\uparrow 1}\VaR_{\P^\ga}^{\al}\big(f(X)\big)=\inf\big\{a\in \R\,\big|\, \P^\ga\big(f(Z)>a\big)=0\big\}.
\]
Now, let $a\in \R$ with $a<q(\ga)$.\ Then, using the left-continuity of $q$, there exists some $\be\in (0,\ga)$ with $a<q(\be)\leq q(\ga)$.\ Hence, using property (iii), i.e., the independence of the random variables $Z_1,\ldots, Z_n$ on $C^\ga$ under $\P^\ga$, and property (iv) in Theorem \ref{th: main result 1},
\begin{align*}
\P^\ga\big(f(Z)>a\big)&\geq \P^\ga\big(f(Z)\geq q(\be)\big)\geq \prod_{i=1}^n\P^\ga\big(Z_i\geq F_{\mu_i}^{-1}(\be)\big)\geq 
\bigg(1-\frac{\be}{\ga}\bigg)^{n}>0,
\end{align*}
so that
$$\VaR_\P^\ga\big(f(X)\big)=\inf\big\{a\in \R\,\big|\, \P^\ga\big(f(Z)>a\big)=0\big\}\geq q(\ga).$$
 \end{proof}

\begin{theorem}\label{th: var.equality}
For all $a\in \R$ with $a\geq q({\ga})$,
$$
\P\big(f(Y)\leq a\big)=\P\big(f(X)\leq a\big).$$
Moreover, for all $\al\in (0,\ga)$,
\[
\VaR_\P^{\al}\big(f(X)\big)=\VaR_{\P^\ga}^{\frac\al\ga}\big(f(Z)\big)\geq \VaR_{\P^\ga}^{\al}\big(f(Z)\big)
\]
and, for all $\al\in [\ga,1)$,
\[
\VaR_\P^\al\big(f(X)\big)=\VaR_\P^\al\big(f(Y)\big)=q(\al).
\]
\end{theorem}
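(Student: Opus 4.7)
My approach is to peel off the three assertions in order, using the splitting of $\Om$ into $\{U\le\ga\}$ and $\{U>\ga\}$ that is built into Theorem \ref{th: main result 1}.

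For the first assertion, I would write
\[
\P\big(f(X)\le a\big)=\P\big(\{f(X)\le a\}\cap\{U\le \ga\}\big)+\P\big(\{f(X)\le a\}\cap\{U>\ga\}\big)
\]
and similarly for $f(Y)$. Since $a\ge q(\ga)$, Lemma \ref{lemma 1} yields $\P(\{f(X)>a\}\cap\{U\le\ga\})=0$, so the first summand equals $\ga$. For $Y$, the relation $Y_i=F_{\mu_i}^{-1}(U)$ together with the monotonicity of $F_{\mu_i}^{-1}$ and $f$ forces $f(Y)\le q(\ga)\le a$ on $\{U\le\ga\}$, so its corresponding summand is also $\ga$. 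Property (iv) in Theorem \ref{th: main result 1} gives $X=Y$ on $\{U>\ga\}$, so the second summands agree, which proves the claim.

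For the second assertion, I plug $\be:=\al/\ga\in(0,1)$ into the identity $\VaR_{\P^\ga}^{\be}(f(Z))=\VaR_\P^{\be\ga}(f(X))$ from Lemma \ref{lem.var.small}, obtaining the equality. The inequality is then immediate from $\al/\ga\ge \al$ and the monotonicity of $\VaR_{\P^\ga}^{\cdot}(f(Z))$ in the level.

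For the third assertion, Lemma \ref{lemma 2} already gives $\VaR_\P^\al(f(Y))=q(\al)$, so only $\VaR_\P^\al(f(X))=q(\al)$ is at stake. For the upper bound, since $q(\al)\ge q(\ga)$, the first assertion together with the standard property $F_{f(Y)}(q(\al))\ge\al$ yields $\P(f(X)\le q(\al))\ge\al$, so $\VaR_\P^\al(f(X))\le q(\al)$. For the lower bound I take $a<q(\al)$ and show $\P(f(X)\le a)<\al$ in three cases: (a) if $q(\ga)\le a<q(\al)$, the first assertion and $a<F_{f(Y)}^{-1}(\al)=q(\al)$ give $\P(f(X)\le a)=\P(f(Y)\le a)<\al$; (b) if $a<q(\ga)$ and $\al>\ga$, on $\{U>\ga\}$ one has $f(X)=f(Y)\ge q(\ga)>a$, so $\P(f(X)\le a)\le \P(U\le\ga)=\ga<\al$; (c) the edge case $\al=\ga$, $a<q(\ga)$, where I need the strict inequality $\P(f(X)\le a)<\ga$. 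Here I use $\P(f(X)\le a)=\ga\P^\ga(f(Z)\le a)$ (again because $f(X)>a$ on $\{U>\ga\}$) and the second statement of Lemma \ref{lem.var.small}, which says that $\inf\{a\in\R\mid \P^\ga(f(Z)>a)=0\}=q(\ga)$; thus $\P^\ga(f(Z)\le a)<1$ for $a<q(\ga)$, as needed.

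The main obstacle I anticipate is the last edge case $\al=\ga$, $a<q(\ga)$: the simple bound $\P(f(X)\le a)\le\ga$ is not strict, and ruling out equality really does require the essential-supremum characterization coming from the limit identity $\lim_{\al\uparrow 1}\VaR_{\P^\ga}^\al(f(Z))=q(\ga)$ in Lemma \ref{lem.var.small}, which in turn was proved using the independence of $Z_1,\ldots,Z_n$ under $\P^\ga$.
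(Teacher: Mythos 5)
Your proof is correct and follows essentially the same route as the paper's: the tail identity for $a\geq q(\ga)$ via Lemma \ref{lemma 1} and the equality $X=Y$ on $\{U>\ga\}$, the second assertion read off directly from Lemma \ref{lem.var.small}, and the third assertion reduced to the boundary fact $\VaR_\P^\ga\big(f(X)\big)\geq q(\ga)$, which both you and the paper extract from the limit identity $\lim_{\al\uparrow 1}\VaR_{\P^\ga}^{\al}\big(f(Z)\big)=q(\ga)$ in Lemma \ref{lem.var.small}. The paper merely packages your three-case lower-bound argument as the single observation that both $\VaR_\P^\al\big(f(X)\big)$ and $\VaR_\P^\al\big(f(Y)\big)$ lie in $\big[q(\ga),\infty\big)$, where the two survival functions coincide, so the content is identical.
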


\begin{proof}
We first prove that $\VaR_\P^\ga\big(f(X)\big)\geq q(\ga)$.\ By Lemma \ref{lem.var.small}, it follows that
\[
\VaR_\P^\ga\big(f(X)\big)=\lim_{\al\uparrow 1}\VaR_\P^{\al\ga}\big(f(X)\big)=\lim_{\al\uparrow 1}\VaR_{\P^\ga}^{\al}\big(f(X)\big)=q(\ga).
\]
Now, let $a\in \R$ with $a\geq q({\ga})$ and $\om\in \Om$ with $f\big(Y(\om)\big)>a\geq q({\ga})$.\ Then, by monotonicity of $f$, we can conclude that $U(\om)>\ga$ and thus $f\big(X(\om)\big)=f\big(Y(\om)\big)>a$. Using Lemma \ref{lemma 1}, it follows that
\begin{align*}
 \P \big(\{f(X)>a \}\big) &=  \P \big(\{f(X)>a \} \cap \{U> \ga \} \big)\\
 &=\P \big(\{f(Y)>a \} \cap \{U> \ga \} \big)= \P \big(\{f(Y)>a \}\big).
\end{align*}
Hence, for all $\al\in [\ga,1)$,
\begin{align*}
 \VaR^\al\big(f(X)\big)&=\inf\big\{ a\in \big[q(\ga),\infty\big) \,\big|\, \P\big(f(X)> a\big)\leq 1-\al\big\}\\
 &=\inf\big\{ a\in \big[q(\ga),\infty\big)  \,\big|\, \P\big(f(Y)> a\big)\leq 1-\al\big\}=\VaR^\al\big(f(Y)\big).
\end{align*}
The remaining statements now follow from Lemma \ref{lemma 2} and Lemma \ref{lem.var.small}.
\end{proof}

We briefly recall the notion of a tail risk measure, introduced in \cite{liu2021theory}, in a slightly modified version that is specific to our setup.\ Let $\cX$ be a nonempty set of random variables.\ For any random variable $T\in \cX$ with distribution $\nu$, let $U_Z$ be a uniformly distributed random variable with $T=F_\nu^{-1}(U_T)$ $\P$-almost surely.\ The existence of such a random variable follows, for example, from \cite[Lemma A.32]{MR3859905}, recalling that $(\Om,\cF,\P)$ is implicitly assumed to be atomless via the existence of a uniformly distributed random variable. Then, the tail risk of $Z$ beyond its $\al$-quantile is defined as
\[
T^\al:=F_\nu^{-1}(\al+(1-\al)U_T)\quad\text{for }\al\in (0,1).
\]
We point out that, a priori, the definition of $Z^\al$ also depends on the choice of the uniformly distributed random variable $U_Z$. However, passing from $U_Z$ to another uniformly distributed random variable, say $V$, it follows that
\begin{equation}\label{eq.uniqueness.tail.risk}
 T^\al\stackrel{\rm d}{=}F_\nu^{-1}(\al+(1-\al)V) \quad\text{for all }\al\in (0,1).
\end{equation}
 Therefore, $Z^\al$ is unique up to equality in distribution.

Following \cite[Definition 1]{liu2021theory}, for $\al\in (0,1)$, we say that a map $R\colon \cX\to \R$ is an \textit{$\al$-tail risk measure} if $R(T_1)=R(T_2)$ for all $T_1,T_2\in \cX$ with $T_1^\al\stackrel{d}{=}T_2^\al$.

\begin{theorem}\label{th: tailequivalent}
Let $\cX$ be a set of random variables containing $f(X)$ and $f(Y)$ and $\al\in [\ga,1)$. Then, for every  $\al$-tail risk measure $R\colon \cX\to \R$,
\[
R\big(f(X)\big)=R\big(f(Y)\big).
\]
\end{theorem}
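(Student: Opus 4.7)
The strategy is to reduce the conclusion $R(f(X))=R(f(Y))$ to showing that the tail random variables $f(X)^\al$ and $f(Y)^\al$ have the same distribution, so that the defining property of an $\al$-tail risk measure applies directly.

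First, I would denote by $\nu_X$ and $\nu_Y$ the distributions of $f(X)$ and $f(Y)$, respectively, and use the representation \eqref{eq.uniqueness.tail.risk}: for any uniformly distributed random variable $V$ on $(0,1)$,
\[
f(X)^\al\stackrel{\rm d}{=}F_{\nu_X}^{-1}\big(\al+(1-\al)V\big)\quad\text{and}\quad f(Y)^\al\stackrel{\rm d}{=}F_{\nu_Y}^{-1}\big(\al+(1-\al)V\big).
\]
Thus it suffices to show that $F_{\nu_X}^{-1}(\be)=F_{\nu_Y}^{-1}(\be)$ for all $\be\in(\al,1)$, because then the two right-hand sides coincide almost surely and hence in distribution.

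Next, I would invoke Theorem \ref{th: var.equality}, which gives exactly
\[
F_{\nu_X}^{-1}(\be)=\VaR_\P^\be\big(f(X)\big)=\VaR_\P^\be\big(f(Y)\big)=F_{\nu_Y}^{-1}(\be)\quad\text{for all }\be\in[\ga,1).
\]
Since we assume $\al\in[\ga,1)$, every $\be=\al+(1-\al)u$ with $u\in(0,1)$ satisfies $\be\in(\al,1)\subseteq[\ga,1)$, so the quantile functions agree on the required range. Consequently $f(X)^\al\stackrel{\rm d}{=}f(Y)^\al$, and the definition of an $\al$-tail risk measure immediately yields $R(f(X))=R(f(Y))$.

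There is essentially no real obstacle here: the only subtlety is to make sure the tail random variables are well defined up to distribution regardless of the chosen uniform random variables used in their construction, which is precisely the content of \eqref{eq.uniqueness.tail.risk}. Everything else is a direct application of Theorem \ref{th: var.equality} together with the observation that $\al+(1-\al)(0,1)=(\al,1)$.
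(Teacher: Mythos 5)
Your proposal is correct and follows essentially the same route as the paper: both arguments combine Theorem \ref{th: var.equality} (equality of the quantile functions of $f(X)$ and $f(Y)$ on $[\ga,1)$) with the distributional representation \eqref{eq.uniqueness.tail.risk} of the tail random variables, and then conclude from the defining property of an $\al$-tail risk measure. Your additional remark that $\al+(1-\al)(0,1)=(\al,1)\subseteq[\ga,1)$ just makes explicit the range on which the quantile functions need to agree, which the paper leaves implicit.
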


\begin{proof}
  Let $T_1:=f(X)$ and $T_2:=f(Y)$ with distributions $\nu_1\in \cP(\R)$ and $\nu_2\in \cP(\R)$, respectively. Then, by Theorem \ref{th: var.equality}, it follows that
 \begin{equation}\label{eq.tail.distribution}
 F_{\nu_1}^{-1}(\al)=\VaR_\P^\al\big(f(X)\big)=\VaR_\P^\al\big(f(Y)\big)=F_{\nu_2}^{-1}(\al) \quad\text{for all }\al\in [\ga,1).
 \end{equation}
 Then, by \eqref{eq.uniqueness.tail.risk}, $T_i^\al\stackrel{\rm d}{=} F_{\nu_i}^{-1}(\al+(1-\al)U)$ for $i=1,2$. Hence, by \eqref{eq.tail.distribution},
 \[
  T_1^\al \stackrel{\rm d}{=} F_{\nu_1}^{-1}(\al+(1-\al)U)=F_{\nu_2}^{-1}(\al+(1-\al)U)\stackrel{\rm d}{=}T_2^\al\quad\text{for all }\al\in [\ga,1).
 \]
 The statement now follows directly from the definition of an $\al$-tail risk measure.
\end{proof}
Theorem \ref{th: tailequivalent} establishes an equivalence between comonotonicity and upper comonotonicity for arbitrary tail risk measures and nondecreasing left-continuous aggregation functions.\
In this regard, observe that Theorem \ref{th: tailequivalent} generalizes \cite[Proposition~6]{Cheug2009uc}, where only the special case $ f(x)=\sum_i^nx_i$ is considered, and the risk measure is either $\text{VaR}^\alpha$, $\text{TVaR}^\alpha$, or $\text{ES}^\alpha$ with $\alpha\in (\gamma,1)$, cf.\ \cite{Cheug2009uc} for the details.\ We underline that, even in the case $f(x)=\sum_i^nx_i$, considered in \cite[Proposition~6]{Cheug2009uc}, Theorem~\ref{th: tailequivalent} offers an additional contribution with respect to \cite{Cheug2009uc}, in that our result also covers the limiting case $\alpha=\gamma$. 

Consider the case, where $\cX$ is the set of \textit{all} random variables on $\Om$. Then, Theorem \ref{th: tailequivalent} states that, for $\al\in [\ga,1)$ and every $\al$-tail risk measure $R$ on $\cX$, the worst possible $\al$-tail risk of $f(X)$, taken over all random vectors $X$ with $X_i\sim \mu_i$ for $i=1,\ldots, n$ and $\rh(X_1,\ldots, X_n)\leq \de$, exceeds the value $R\big(f(Y)\big)$, i.e., the $\al$-tail risk associated to a perfectly comonotone realization of the marginal distributions.

This becomes even more apparent if one considers coherent $\al$-tail risk measures and weighted sums of the components of the random vector $X$.\ Since every tail risk measure is law-invariant, we write $R(\nu)=R(T)$ for any random variable $T\in \cX$ with $T\sim \nu\in \cP(\R)$. 

\begin{corollary}\label{cor:tail risk.pos.hom}
Let $\mu_i=\mu\in \cP(\R)$ for $i=1,\ldots, n$, $\cX$ be a convex cone of random variables containing $F_\mu^{-1}(U)$ and $X_i$ for $i=1,\ldots, n$, $\al\in [\ga,1)$, $R\colon \cX\to \R$ be a positively homogeneous $\al$-tail risk measure, and $\la_1,\ldots, \la_n\in [0,\infty)$. Then,
\[
R\bigg(\sum_{i=1}^n\la_i X_i\bigg)=\sum_{i=1}^n \la_iR(X_i)= R(\mu)\sum_{i=1}^n\la_i.
\]
\end{corollary}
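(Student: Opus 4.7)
The plan is to apply Theorem \ref{th: tailequivalent} with the nondecreasing, left-continuous (in fact continuous) aggregation function $f(x) := \sum_{i=1}^n \lambda_i x_i$, which is nondecreasing because every $\lambda_i \geq 0$. Since $\alpha \in [\gamma,1)$ and $R$ is an $\alpha$-tail risk measure on $\cX$, Theorem \ref{th: tailequivalent} gives
\[
R\!\left(\sum_{i=1}^n \lambda_i X_i\right) = R\!\left(\sum_{i=1}^n \lambda_i Y_i\right),
\]
provided both random variables lie in $\cX$, which follows since $\cX$ is a convex cone containing each $X_i$ and $F_\mu^{-1}(U)$.

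Next I would exploit the assumption $\mu_i = \mu$ for all $i$. Since $Y_i = F_{\mu_i}^{-1}(U) = F_\mu^{-1}(U)$ for every $i$, all components of $Y$ coincide $\P$-almost surely, hence
\[
\sum_{i=1}^n \lambda_i Y_i = \Bigl(\sum_{i=1}^n \lambda_i\Bigr) F_\mu^{-1}(U).
\]
Positive homogeneity of $R$ then yields
\[
R\!\left(\sum_{i=1}^n \lambda_i Y_i\right) = \Bigl(\sum_{i=1}^n \lambda_i\Bigr) R\bigl(F_\mu^{-1}(U)\bigr).
\]

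Finally, I would invoke law-invariance of $R$, which is a free consequence of the tail risk measure property: if $T_1 \stackrel{\rm d}{=} T_2$ then certainly $T_1^\alpha \stackrel{\rm d}{=} T_2^\alpha$, so $R(T_1) = R(T_2)$. Since $F_\mu^{-1}(U) \sim \mu$ and $X_i \sim \mu_i = \mu$ for every $i$, this gives $R\bigl(F_\mu^{-1}(U)\bigr) = R(\mu) = R(X_i)$, and combining the three displays closes both equalities in the claim.

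I do not anticipate a substantive obstacle: the argument is essentially a three-line chain reducing to the marginal-level identity. The only minor points to verify are the cone/membership assumptions, which are handed to us by hypothesis, and that positive homogeneity may be applied with the scalar $\sum_i \lambda_i \geq 0$ (the degenerate case $\sum_i \lambda_i = 0$ forces all $\lambda_i = 0$, making both sides trivially zero under the standard convention $R(0)=0$ that follows from positive homogeneity).
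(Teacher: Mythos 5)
Your proof is correct and follows essentially the same route as the paper: apply Theorem \ref{th: tailequivalent} to the aggregation function $f(x)=\sum_{i=1}^n\la_i x_i$, collapse $\sum_i\la_i Y_i$ to $\bigl(\sum_i\la_i\bigr)F_\mu^{-1}(U)$, and finish with positive homogeneity and the law-invariance of tail risk measures. The extra care you take with the cone membership and the degenerate case $\sum_i\la_i=0$ is fine but not needed beyond what the hypotheses already provide.
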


\begin{proof}
 By Theorem \ref{th: tailequivalent}, it follows that
 \[
 R\bigg(\sum_{i=1}^n\la_i X_i\bigg)=R\bigg(\sum_{i=1}^n\la_i F_\mu^{-1}(U)\bigg)=R(\mu)\sum_{i=1}^n\la_i=\sum_{i=1}^n\la_iR(X_i).
 \]
\end{proof}

Consider again the case, where $\cX$ is the set of all random variables on $\Om$. Then, the previous corollary tells us that, for $\al\in [\ga,1)$ and every coherent $\al$-tail risk measure $R$ on $\cX$, the worst possible $\al$-tail risk of $\sum_{i=1}^n \la_i X_i$ with $\la_i\geq0$ for $i=1,\ldots, n$, taken over all random vectors $X$ with $X_i\sim \mu$ for $i=1,\ldots, n$ and $\rh(X_1,\ldots, X_n)\leq \de$, is exactly $R(\mu)\sum_{i=1}^n\la_i$.\footnote{Indeed, since $R$ is coherent, $R\big(\sum_{i=1}^n\la_iX_i\big)\leq \sum_{i=1}^n\la_i R(X_i)=R(\mu)\sum_{i=1}^n\la_i$ for any random vector $X$ with $X_i\sim \mu$ for $i=1,\ldots, n$.} 

The third main result of this section is another application of Theorem \ref{th: main result 1} in the context of expectiles, cf.\ \cite{MR4216328}.\ For a random variable $T$ on $\Om$ with $\E(|T|)<\infty$ and $\al\in (0,1)$, the \textit{$\al$-expectile} ${\rm ex}^\al(T)\in \R$ of $Z$ is defined as the unique solution to the equation
\[
\al\E\Big(\big(T-{\rm ex}^\al(T)\big)_+\Big)=(1-\al)\E\Big(\big(T-{\rm ex}^\al(T)\big)_-\Big).
\]
Note that, for $\al=\frac12$, ${\rm ex}^\al(T)=\E(T)$.\ It is well-known that ${\rm ex}^\al$ is a law-invariant risk measure, so that we also use the notation ${\rm ex}^\al(\nu):={\rm ex}^\al(T)$ if $T\sim \nu$.\ We refer to \cite{bellini2017risk,MR3145849, delbaen2013expectiles,MR3479327} for a detailed and axiomatic study of expectiles as well as their interpretation in a financial context.

As a consequence of \cite[Theorem 3]{MR4216328}, we have the following theorem.

\begin{theorem}\label{thm: expectiles}
 Assume that $\mu_i=\mu\in \cP(\R)$ for $i=1,\ldots, n$ with $\int_\R |x|\,\mu(\d x)<\infty$, and let $\la_1,\ldots \la_n\in [0,\infty)$.\ Then, for all $\al\in \big[\frac12,1\big)$ with ${\rm ex}^\al(\mu)\geq F_\mu^{-1}(\ga)$,
 \[ {\rm ex}^\al\bigg(\sum_{i=1}^n\la_iX_i\bigg)=\sum_{i=1}^n\la_i{\rm ex}^\al(X_i)={\rm ex}^\al(\mu)\sum_{i=1}^n\la_i.
 \]
\end{theorem}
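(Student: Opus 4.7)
The strategy is to combine positive homogeneity and law-invariance of ${\rm ex}^\al$ with the tail-agreement furnished by Theorem \ref{th: var.equality}, in order to identify the candidate
\[
 e := \Bigl(\sum_{i=1}^n \la_i\Bigr) {\rm ex}^\al(\mu)
\]
as the $\al$-expectile of both the comonotonic sum $S^\ast := \sum_{i=1}^n \la_i Y_i = \bigl(\sum_{i=1}^n \la_i\bigr) F_\mu^{-1}(U)$ and the aggregated loss $S := \sum_{i=1}^n \la_i X_i$. The second equality in the statement, $\sum_{i=1}^n \la_i {\rm ex}^\al(X_i) = {\rm ex}^\al(\mu)\sum_{i=1}^n \la_i$, is immediate from law-invariance and $X_i \sim \mu$; for $S^\ast$, positive homogeneity of the expectile additionally gives ${\rm ex}^\al(S^\ast) = e$.

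Setting $\La := \sum_{i=1}^n \la_i$, the hypothesis ${\rm ex}^\al(\mu) \geq F_\mu^{-1}(\ga)$ translates to $e \geq q(\ga) = \La F_\mu^{-1}(\ga)$. Applying Theorem \ref{th: var.equality} to the nondecreasing, left-continuous aggregation $f(x) = \sum_{i=1}^n \la_i x_i$ yields $\P(S \leq a) = \P(S^\ast \leq a)$ for all $a \geq q(\ga)$, hence in particular for all $a \geq e$. This forces
\[
 \E\bigl[(S - e)_+\bigr] = \E\bigl[(S^\ast - e)_+\bigr].
\]
Furthermore, since $X_i \sim \mu \sim Y_i$ for each $i$, one has $\E[S] = \La \int_\R x \, \mu(\d x) = \E[S^\ast]$. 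Using the elementary identity $\E[(T - e)_-] = \E[(T - e)_+] + \bigl(e - \E[T]\bigr)$, this propagates to $\E[(S - e)_-] = \E[(S^\ast - e)_-]$.

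It follows that the defining equation $\al \E[(T - e)_+] = (1-\al) \E[(T - e)_-]$ of the $\al$-expectile is satisfied at $e$ for $T = S$, since it holds for $T = S^\ast$. To conclude ${\rm ex}^\al(S) = e$, it then remains to invoke \cite[Theorem 3]{MR4216328}, which is applicable for $\al \in [\tfrac12,1)$ under the integrability hypothesis $\int_\R |x|\,\mu(\d x) < \infty$. The principal conceptual obstacle is that Theorem \ref{th: var.equality} only controls the joint law of $X$ beyond the threshold $q(\ga)$; bridging this partial tail-agreement to the global expectile equation is exactly where the identity of means and the hypothesis ${\rm ex}^\al(\mu) \geq F_\mu^{-1}(\ga)$ intervene, placing the candidate $e$ in the regime where the upper-tail information is already sufficient.
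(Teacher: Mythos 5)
Your proof is correct, but it takes a genuinely different route from the paper. The paper's argument verifies the pairwise non-discordance condition
\[
\P\Big(\big(\la_i X_i-{\rm ex}^\al(\la_iX_i)\big)\big(\la_j X_j-{\rm ex}^\al(\la_jX_j)\big)<0\Big)=0,
\]
using property (iv) of Theorem \ref{th: main result 1} together with ${\rm ex}^\al(\mu)\geq F_\mu^{-1}(\ga)$, and then invokes \cite[Theorem 3]{MR4216328} to obtain additivity of the expectile across the summands. You instead verify the first-order condition defining the expectile directly at the candidate $e=\big(\sum_{i=1}^n\la_i\big){\rm ex}^\al(\mu)$: the identity $\E[(T-e)_+]=\int_e^\infty\P(T>t)\,\d t$ combined with the tail agreement $\P(S\leq a)=\P(S^\ast\leq a)$ for $a\geq q(\ga)$ from Theorem \ref{th: var.equality} gives equality of the upper partial moments, and equality of means propagates this to the lower partial moments via $\E[(T-e)_-]=\E[(T-e)_+]+\big(e-\E[T]\big)$; since the paper defines ${\rm ex}^\al(S)$ as the \emph{unique} solution of this equation (uniqueness following from strict monotonicity of $x\mapsto\al\E[(T-x)_+]-(1-\al)\E[(T-x)_-]$), the conclusion is immediate. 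One remark: your final sentence invoking \cite[Theorem 3]{MR4216328} is redundant and slightly misplaced — that theorem concerns additivity under the comovement condition, not the identification of the expectile from its defining equation, and your argument is already complete without it. In fact, your approach buys something: it is self-contained modulo Theorem \ref{th: var.equality} and never uses $\al\geq\tfrac12$, so it establishes the conclusion for every $\al\in(0,1)$ with ${\rm ex}^\al(\mu)\geq F_\mu^{-1}(\ga)$, whereas the paper's route inherits the restriction $\al\in\big[\tfrac12,1\big)$ from the cited result.
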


\begin{proof}
 For $\al=\frac12$ the statement is clear since, in this case, the $\al$-expectile is the mean.\ Let $\al\in \big(\frac12,1\big)$ with ${\rm ex}^\al(\mu)\geq F_\mu^{-1}(\ga)$.\ By \cite[Theorem 3]{MR4216328}, we have to show that
\begin{equation}\label{eq.thm.expectiles}
  \P\Big(\big(\la_i X_i-{\rm ex}^\al(\la_iX_i)\big)\big(\la_j X_j-{\rm ex}^\al(\la_jX_j)\big)<0\Big)=0\quad\text{for }i,j=1,\ldots, n. 
 \end{equation}
 Since ${\rm ex}^\al$ is law-invariant and positively homogeneous, and $X_i\sim \mu$ for $i=1,\ldots, n$, this is equivalent to showing that
\[
\P\Big(\big(X_i-{\rm ex}^\al(\mu)\big)\big(X_j-{\rm ex}^\al(\mu)\big)<0\Big)=0 \quad\text{for }i,j=1,\ldots, n. 
\]
To that end, first observe that
\begin{align*}
\P\Big(\big(X_i-{\rm ex}^\al(\mu)\big)\big(X_j-{\rm ex}^\al(\mu)\big)<0\Big)&\leq \P\big(\{X_i<{\rm ex}^\al(\mu)\}\cap \{X_j>{\rm ex}^\al(\mu)\}\big)\\
&\quad +\P\big(\{X_i>{\rm ex}^\al(\mu)\}\cap \{X_j<{\rm ex}^\al(\mu)\}\big) 
\end{align*}
for $i,j=1,\ldots, n$.\ Hence, it remains to show that
\[
\P\big(\{X_i<{\rm ex}^\al(\mu)\}\cap \{X_j>{\rm ex}^\al(\mu)\}\big)=0\quad\text{for }i,j=1,\ldots, n.
\]
Since ${\rm ex}^\al(\mu)\geq F_\mu^{-1}(\ga)$, by property (iv) in Theorem \ref{th: main result 1}, it follows that
\begin{align*}
\P\big(\{X_i<{\rm ex}^\al(\mu)\}\cap & \{X_j>{\rm ex}^\al(\mu)\}\big)=\P\big(\{X_i<{\rm ex}^\al(\mu)\}\cap \{X_j>{\rm ex}^\al(\mu)\}\cap \{U>\ga\}\big)\\
&\; =\P\big(\big\{F_\mu^{-1}(U)<{\rm ex}^\al(\mu)\big\}\cap \big\{F_\mu^{-1}(U)>{\rm ex}^\al(\mu)\big\}\cap \{U>\ga\}\big)=\P(\emptyset)=0
\end{align*}
for $i,j=1,\ldots, n$. We have therefore proved the validity of \eqref{eq.thm.expectiles}, and the statement now follows from \cite[Theorem 3]{MR4216328}.
\end{proof}

\begin{remark} \label{remark quotient}
Consider the case, where $\mu_i=\mu\in \cP(\R)$ for $i\in \N$ with $\int_{\R}x^2\,\mu(\d x)<\infty$ and $m^\ga:=\frac{1}{\ga}\int_0^\ga F_\mu^{-1}(u)\,\d u>0$. Moreover, let $\al\in [\ga,1)$ and $(\la_i)_{i\in \N}\subset [0,\infty)$ with 
$$\liminf_{n\to \infty}\frac1n\sum_{i=1}^n\la_i>0\quad\text{and}\quad\lim_{n\to\infty}\frac{1}{n^2}\sum_{i=1}^n \la_i^2=0.\footnote{These conditions are, for example, satisfied if $\liminf_{i\to \infty}\la_i>0$ and $\lim_{i\to \infty}\frac{\la_i}{i}= 0$.}$$
In this remark, we aim to study the asymptotic behaviour of the ratio
\[
r^\al(n):=\frac{\VaR^\al\big(\sum_{i=1}^n\la_i X_i\big)}{\VaR_{\P^\ga}^\al\big(\sum_{i=1}^n\la_i Z_i\big)}
\]
as the number of customers $n$ tends to infinity.

First observe that $Z_1,\ldots, Z_n$ are i.i.d.\ with distribution $\mu^\ga \in \cP(\R)$, mean $m^\ga$, and variance $\var(\mu^\ga)<\infty$.\ Let $\ep>0$, then, using Markov's inequality and the independence of $Z_1,\ldots, Z_n$,
\[
\P^\ga\bigg( \sum_{i=1}^n\la_i \big(Z_i-m^\ga\big)>n\ep\bigg)\leq \frac{\var(\mu^\ga)}{\ep^2} \frac{1}{n^2}\sum_{i=1}^n\la_i^2.
\]
Since $\lim_{n\to\infty}\frac{1}{n^2}\sum_{i=1}^n \la_i^2=0$, there exist $c>0$ and $n_0\in \N$ such that $\sum_{i=1}^n\la_i\geq \frac1c$ and
\[
\P^\ga\bigg( \sum_{i=1}^n\la_i Z_i>n\ep+m^\ga\sum_{i=1}^n\la_i\bigg)=\P^\ga\bigg( \sum_{i=1}^n\la_i \big(Z_i-m^\ga\big)>n\ep\bigg)\leq 1-\al
\]
for all $n\in \N$ with $n\geq n_0$.\ Hence, for all $n\in \N$ with $n\geq n_0$,
\[
\VaR_{\P^\ga}^\al\bigg(\sum_{i=1}^n\la_i Z_i\bigg)\leq n\ep+m^\ga\sum_{i=1}^n\la_i\leq \big(m^\ga+c\ep\big) \sum_{i=1}^n\la_i,
\]
so that, by Theorem \ref{cor:tail risk.pos.hom},
\[
 r^\al(n)\geq \frac{F_{\mu}^{-1}(\al)}{m^\ga+c\ep}\quad\text{for all }n\in \N \text{ with }n\geq n_0 .
\]
Since $F_{\mu}^{-1}$ is nondecreasing, we have therefore shown that
\begin{equation}\label{eq.varestimate1}
\limsup_{n\to \infty} \frac{\VaR^\al\big(\sum_{i=1}^n\la_i X_i\big)}{\VaR_{\P^\ga}^\al\big(\sum_{i=1}^n\la_i Z_i\big)}\geq \frac{\VaR^\al(\mu)}{\frac{1}{\al}\int_0^\al F_\mu^{-1}(u)\,\d u}\geq 1.
\end{equation}
In particular, there exists some $n_0\in \N$ such that
\begin{equation}\label{eq.varestimate2}
 \frac{\VaR^\al\big(\sum_{i=1}^n\la_i X_i\big)}{\VaR_{\P^\ga}^\al\big(\sum_{i=1}^n\la_i Z_i\big)}\geq \frac{\VaR^\al(\mu)}{\int_{\R} x\,\mu(\d x)}
\quad\text{for all }n\in \N\text{ with }n\geq n_0.\footnote{Indeed, if $\mu$ is a Dirac, $r(n)=1$ for all $n\in \N$ and $\VaR^\ga(\mu)=\int_\R x\,\mu(\d x)$.\ Otherwise, $F_\mu^{-1}$ nonconstant and nondecreasing, so that $m^\ga=\frac1\ga\int_0^\ga F_{\mu}^{-1}(u)\,\d u<\int_0^1 F_{\mu}^{-1}(u)\,\d u=\int_{\R} x\,\mu(\d x)$.}
\end{equation}
\end{remark}

\begin{example}[Value at risk for credit portfolio] \label{ex.application} Tail risk measures such as value at risk or expected shortfall are an essential part of risk management for most financial institutions. In this example, we discuss the implications of our theoretical results on dependence uncertainty 
for the value at risk in the context of portfolio credit loss for a bank caused by defaulting borrowers.\ 

Throughout, we use the same notation as in the proof of Theorem  \ref{th: main result 1} and consider a simplified setting with $n$ borrowers, where the credit loss caused by a default equals the remaining exposure of the respective borrower, i.e., for $i=1,\ldots, n$, the loss of borrower $i$ can be interpreted as a random variable
$$L_i= \text{Exposure}_i\cdot X_i\quad\text{with}\quad X_i\sim B(1,p_i),$$
where $p_i\in (0,1)$ is the probability of default (PD) of borrower $i$ over a one year time horizon.\ Note that, in practice, the true PD might of course differ from the estimated PD of the rating model.\ However, in order to isolate the
impact of 
dependence uncertainty on the portfolio value at risk, we neglect the possibility of uncertain PDs, and implicitly assume $p_i$ to be the true PD of borrower $i$ for $i=1,\ldots, n$. For an axiomatic study of model uncertainty related to PDs, including applications in the context of regulatory capital requirements, we refer to \cite{nendel2023axiomatic}. 

For $\pi\in \cpl^n(\cM)$ with $\cM$ consisting of all $\mu\in \cP(\R)$ such that $\int_\R x^2\,\mu(\d x)<\infty$, we consider
\[
\rh(\pi):=\max_{i\neq j} \cor\big(\pi^{ij}\big)
\]
where $\pi^{ij}$ is given as in Example \ref{ex.abs.dep.meas} c).
Our aim is now to calculate the value of risk at level $\alpha=0.999$ of 
\[
f\big(X_1,\ldots,X_n\big):= \sum_{i=1}^n \text{Exposure}_i \cdot X_i,
\]
allowing for possible dependences up to a certain level $\de\in (0,1)$, i.e., $\rh(X_1,\ldots, X_n)\leq \de$.
Observe that, if $\la_i=\text{Exposure}_i\geq 0$ for $i=1,\ldots, n$, the function $f\colon \R^n \to \R$, given by $f(x)=\sum_{i=1}^n \la_i x_i$ for $x\in \R^n$, is nondecreasing and continuous.\ We point out that the rather high value for $\alpha$ is the confidence level commonly used in practice for the value at risk over a one year horizon, describing a once in a thousand years event. From Example \ref{ex. correlations} a) i), we know that if
\begin{equation}\label{eq.estimate.ex.var}
\ga:=\al=0.999 \geq\frac{1}{1+\de\frac{p_{\min}}{1-p_{\min}}},
\end{equation}
 the conditions of Theorem \ref{th: tailequivalent} are met. In this case, $\ga>1-p_{\min}\geq 1-p_i$ and, on the event $C^{\ga}$, the random variables $Z_1,\ldots, Z_n$ are independent under $\P^{\ga}$ with $\P^{\ga}\big( Z_i=0 \big)=\frac{1-p_i}{\ga}$  and $\P^{\ga}\big( Z_i=1 \big)=\frac{\ga-1+p_i}{\ga}$ for $i=1,\ldots, n$. Hence,
 \[
 Z_i \sim B\left(1,\frac{\ga-1+p_i}{\ga} \right)\quad\text{for }i=1,\ldots,n.
 \]
For the sake of simplicity, we now consider a portfolio consisting of $n=1000$ borrowers, each having an exposure of $1$ and a PD of $1\%$, but the results can easily be transferred to more complex portfolio constellations with different exposures and different PDs as previously described.\ 
Choosing $\de=\frac1{10}$, one sees that \eqref{eq.estimate.ex.var} is satisfied.\ In this case,
\[
f(Z)=\sum_{i=1}^n Z_i \sim B\left(n,\frac{\ga-1+p}{\ga} \right),
\]
so that the conditional value at risk on the event $C^{\ga}$ is given by 
\[
\VaR^\al_{\P^{\ga}} \big(f(Z) \big)=20.
\]
On the other hand, using Theorem \ref{th: main result 1} and Theorem \ref{th: var.equality}, we obtain
\[
\VaR^\al \big(f(X) \big)=1000=50\cdot \VaR^\al_{\P^{\ga}}\big(f(Z)\big),
\]
i.e., the potential value at risk for a once in a thousand years event could be $50$ times higher than the value at risk, given that we have not observed such an event before. By \eqref{eq.varestimate1}, we know that
\[
 \limsup_{n\to \infty} \frac{\VaR^\al \big(f(X) \big)}{\VaR_{\P^\ga}^\al \big(f(Z) \big)}\geq \frac{0.999}{0.999-0.99}=111.
\]
The threshold
\[
 \frac{\VaR^\al\big(B(1,p)\big)}{p}=100,
\]
given in \eqref{eq.varestimate2}, is reached for $n_0=100\,000$ customers.\ The asymptotic behaviour of the ratio $r^{0.999}$ is depicted in Figure \ref{var.behaviour}.

In applications, one usually only has a default rate history of a few decades.\ Consequently, with a very high probability one only observes default rates on the event $C^{\ga}$, where in our example the defaults behave independently.\ Information about tail behavior for $\al\geq 0.999$ is usually not available and needs to be estimated.\ Simply extrapolating correlation effects from the available data history, in this case $C^{\ga},$ might not be sufficient to estimate the value at risk for such large values of $\al$, but could lead to a drastic underestimation of the true risk, in our case, possibly by a factor of $50$.

Last but not least, we point out that, by Theorem \ref{th: var.equality}, we also have
\[
\VaR^{\al}\big(f(X)\big)=\VaR^{\frac{\al}{\ga}}_{\P^{\ga}}\big(f(Z)\big)\geq \VaR^{\ga}_{\P^{\ga}}\big(f(Z)\big)\text{ for }\al<0.999.
\]
In Figure \ref{VaR2} both values coincide on the chosen grid for the displayed values of $\al<0.999$.

\begin{figure}[htb]\label{var.behaviour}
\includegraphics[width=11cm]{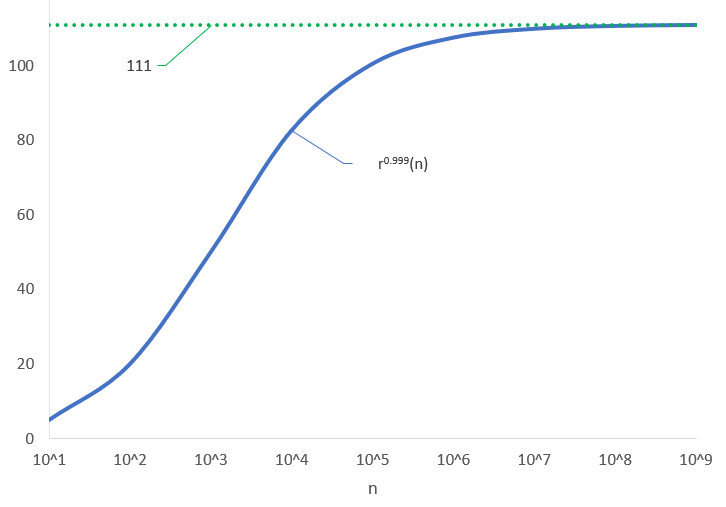}
\caption{Quotient between $\VaR^{0.999}\big(f(X)\big)$ and $\VaR^{0.999}_{\P^{\ga}}\big(f(Z)\big)$  in dependence of the number of customers $n$}
\label{Quotient VaR}
\end{figure}

\begin{figure}[htb]
\includegraphics[width=11cm]{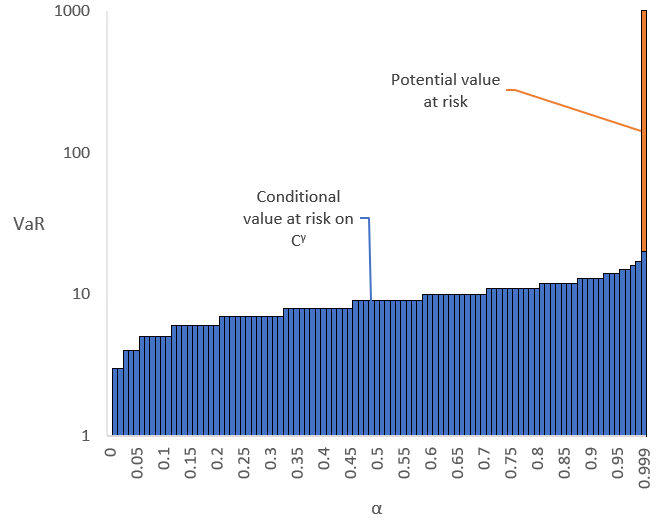}
\caption{Comparison between potential value at risk and conditional value at risk on $C^{\ga}$}
\label{VaR2}
\end{figure}

\end{example}

\appendix

\section{Some auxiliary results}\label{app.A}

For the proof of the first main result, we need the following standard result from optimal transport.

\begin{lemma}\label{lem.compactness.couplings}
Let $\mu_1,\ldots, \mu_n\in \cP(\R)$. Then, the set $\cpl(\mu_1,\ldots, \mu_n)$ is weakly compact.
\end{lemma}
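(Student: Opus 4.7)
The plan is to prove weak compactness of $\cpl(\mu_1,\ldots,\mu_n)$ via the classical two-step argument: establish tightness of the family, then show the set is weakly closed in $\cP(\R^n)$, and conclude by Prokhorov's theorem.

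First, for tightness, I would fix $\ep>0$ and exploit the fact that every Borel probability measure on the Polish space $\R$ is tight.  Concretely, for each $i=1,\ldots,n$ I choose a compact set $K_i\subset \R$ with $\mu_i(\R\sm K_i)\leq \ep/n$.  Then the product $K:=K_1\times\cdots\times K_n$ is compact in $\R^n$, and for an arbitrary $\pi\in\cpl(\mu_1,\ldots,\mu_n)$ a union bound together with the marginal constraints yields
\[
\pi(\R^n\sm K)\leq \sum_{i=1}^n \pi\big(\{x\in\R^n\,|\,x_i\notin K_i\}\big)=\sum_{i=1}^n\mu_i(\R\sm K_i)\leq \ep.
\]
This shows that $\cpl(\mu_1,\ldots,\mu_n)$ is tight in $\cP(\R^n)$, and by Prokhorov's theorem it is relatively weakly compact.

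Next, I would show that $\cpl(\mu_1,\ldots,\mu_n)$ is weakly closed.  Suppose $(\pi^k)_{k\in\N}\subset \cpl(\mu_1,\ldots,\mu_n)$ converges weakly to some $\pi\in\cP(\R^n)$.  Since each coordinate projection $\pr_i\colon\R^n\to\R$ is continuous, the continuous mapping principle for weak convergence gives $\pi^k\circ\pr_i^{-1}\to \pi\circ\pr_i^{-1}$ weakly in $\cP(\R)$ as $k\to\infty$.  But $\pi^k\circ\pr_i^{-1}=\mu_i$ for every $k$, so the limit satisfies $\pi\circ\pr_i^{-1}=\mu_i$ for $i=1,\ldots,n$, meaning $\pi\in\cpl(\mu_1,\ldots,\mu_n)$.

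Combining the two steps, $\cpl(\mu_1,\ldots,\mu_n)$ is a weakly closed subset of a relatively weakly compact set, hence weakly compact.  I do not expect any real obstacle here; the only subtlety worth highlighting is that we rely on the Polish structure of $\R$ (to invoke tightness of individual marginals and Prokhorov's theorem), but this is already built into the standing setup of the paper.
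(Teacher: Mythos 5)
Your proof is correct and follows essentially the same route as the paper: tightness of the coupling set via a union bound over the marginals, weak closedness via continuity of the coordinate projections, and Prokhorov's theorem. The only cosmetic difference is that you pick a separate compact set $K_i$ for each marginal where the paper uses a single $K$, which changes nothing of substance.
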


\begin{proof}
 Let $\ep>0$. Then, there exists a compact set $K\subset \R$ with
 \[
  \mu_i(\R\sm K)<\frac{\ep}n \quad\text{for all }i=1,\ldots, n.
 \]
 Hence, for any $\pi\in \cpl(\mu_1,\ldots, \mu_n)$,
 \[
  \pi (\R^n\sm K^n)\leq \sum_{i=1}^n \mu_i(\R \sm K)<\ep.
 \]
 We have therefore shown that the set $\cpl(\mu_1,\ldots, \mu_n)$ is tight. Moreover,
\[
\cpl(\mu_1,\ldots, \mu_n)=\bigcap_{i=1}^n \big\{\pi\in \cP(\R^n)\,\big|\, \pi\circ \pr_i^{-1}=\mu_i\big\}
\]
is weakly closed since $\pr_i\colon \R^n\to \R$ is continuous for all $i=1,\ldots, n$.
\end{proof}

The following lemma is sort of a folklore theorem for copulas.\ For the reader's convenience, we state and prove it in our setup.

\begin{lemma}\label{lem.unif.convergence.couplings}
Let $\mu_1,\ldots, \mu_n\in \cP(\R)$ with continuous distribution functions $F_{\mu_1},\ldots, F_{\mu_n}$ and $(\pi^k)_{k\in \N}\subset \cpl(\mu_1,\ldots, \mu_n)$ be a sequence of couplings with $\pi^k\to \pi\in \cP(\R^n)$ in distribution as $k\to \infty$. Then, there exist unique copulas $c$ and $(c^k)_{k\in \N}$ such that
\begin{equation}\label{eq.copula.representation}
  \pi\big((-\infty, a]\big)=c\big(F_{\mu_1}(a_1),\ldots, F_{\mu_n}(a_n)\big)\quad\text{and}\quad \pi^k\big((-\infty, a]\big)=c^k\big(F_{\mu_1}(a_1),\ldots, F_{\mu_n}(a_n)\big)
\end{equation}
for all $k\in \N$ and $a=(a_1,\ldots, a_n)\in \R^n$. Moreover, 
\[
\sup_{u\in [0,1]^n}\big|c^k(u)- c(u)\big|\to 0\quad \text{as }k\to \infty.
\]
In particular,
\[
 \sup_{a\in \R^n}\big|\pi^k\big((-\infty, a]\big)-\pi\big((-\infty, a]\big)\big|\to 0 \quad \text{as }k\to \infty.
\]
\end{lemma}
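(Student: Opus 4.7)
The plan is to decompose the statement into three logically distinct tasks: existence/uniqueness of the copulas, pointwise convergence on a dense subset of $[0,1]^n$, and the promotion of pointwise to uniform convergence using equicontinuity.

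First, I would establish the representation \eqref{eq.copula.representation}. Since $\pi^k\in\cpl(\mu_1,\ldots,\mu_n)$ and $\pr_i$ is continuous, weak convergence yields $\pi\circ\pr_i^{-1}=\mu_i$, so $\pi\in\cpl(\mu_1,\ldots,\mu_n)$ has continuous marginals as well.\ Sklar's theorem then provides copulas $c$ and $c^k$ satisfying \eqref{eq.copula.representation}.\ Uniqueness follows from the fact that each $F_{\mu_i}$ is continuous with $\lim_{a\to-\infty}F_{\mu_i}(a)=0$ and $\lim_{a\to\infty}F_{\mu_i}(a)=1$, so its range is dense in $[0,1]$; hence $c$ is determined on a dense subset of $[0,1]^n$ and, being $1$-Lipschitz in each coordinate, uniquely extends by continuity.

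Second, I would show pointwise convergence of $c^k$ to $c$ on this dense subset.\ The key observation is that \emph{every} $a=(a_1,\ldots,a_n)\in\R^n$ is a $\pi$-continuity point of the rectangle $(-\infty,a]$: indeed,
\[
\pi\big(\partial(-\infty,a]\big)\leq \sum_{i=1}^n \pi\big(\{x\in\R^n\colon x_i=a_i\}\big)=\sum_{i=1}^n \mu_i(\{a_i\})=0,
\]
since each $\mu_i$ has a continuous distribution function.\ By the portmanteau theorem, $\pi^k\big((-\infty,a]\big)\to\pi\big((-\infty,a]\big)$ for every $a\in\R^n$.\ Translating this through \eqref{eq.copula.representation}, $c^k(u)\to c(u)$ for every $u$ in the range of $a\mapsto(F_{\mu_1}(a_1),\ldots,F_{\mu_n}(a_n))$, which is dense in $[0,1]^n$.

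Third, I would upgrade this to uniform convergence, which I expect to be the main (but still routine) technical step.\ Every copula is $1$-Lipschitz in each coordinate, so
\[
|c^k(u)-c^k(v)|\leq \sum_{i=1}^n |u_i-v_i|\quad\text{and}\quad |c(u)-c(v)|\leq \sum_{i=1}^n |u_i-v_i|
\]
for all $u,v\in[0,1]^n$ and $k\in\N$.\ Thus $\{c^k\}_{k\in\N}\cup\{c\}$ is equicontinuous on the compact set $[0,1]^n$.\ Combined with pointwise convergence on a dense subset, a standard $\varepsilon/3$--argument (cover $[0,1]^n$ by finitely many balls of radius $\varepsilon/(3n)$ centered at points of the dense subset, use equicontinuity on each ball, and choose $k$ large enough to control the pointwise error at the finitely many centers) yields uniform convergence $\sup_{u\in[0,1]^n}|c^k(u)-c(u)|\to 0$.

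Finally, the last claim is immediate from \eqref{eq.copula.representation}: for every $a\in\R^n$,
\[
\big|\pi^k\big((-\infty,a]\big)-\pi\big((-\infty,a]\big)\big|=\big|c^k\big(F_{\mu_1}(a_1),\ldots,F_{\mu_n}(a_n)\big)-c\big(F_{\mu_1}(a_1),\ldots,F_{\mu_n}(a_n)\big)\big|\leq \sup_{u\in[0,1]^n}|c^k(u)-c(u)|,
\]
so taking the supremum over $a\in\R^n$ and letting $k\to\infty$ finishes the proof.
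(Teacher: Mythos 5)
Your proof is correct and follows essentially the same route as the paper: Sklar's theorem for the representation, the portmanteau theorem applied to rectangles whose boundaries are $\pi$-null (by continuity of the marginals) for pointwise convergence, and then an upgrade to uniform convergence on $[0,1]^n$. The only difference is in that last step, where you invoke equicontinuity of the whole family of copulas (each being $1$-Lipschitz in each coordinate) while the paper uses a grid argument combining monotonicity of the $c^k$ with the Lipschitz property of the limit $c$ alone; both are standard and equally valid.
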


\begin{proof}
By Lemma \ref{lem.compactness.couplings}, it follows that $\pi\in \cpl(\mu_1,\ldots, \mu_n)$. Since $F_{\mu_1},\ldots, F_{\mu_n}$ are continuous, by Sklar's theorem, there exist unique copulas $c$ and $(c^k)_{k\in \N}$ such that \eqref{eq.copula.representation} is satisfied. Now, let $a\in \R^n$. Again, since $F_{\mu_1},\ldots, F_{\mu_n}$ are continuous, 
\[
\pi\bigg(\bigcup_{j=1}^n \pr_j^{-1}\big(\{a_j\}\big)\cap (-\infty,a]\bigg)\leq \sum_{j=1}^n \pi\circ\pr_j^{-1}\big(\{a_j\}\big)=\sum_{i=1}^n \mu_j\big(\{a_j\}\big)=0.
\]
Hence, by the portmanteau theorem \cite[Corollary 8.2.10]{MR2267655}, it follows that
\[
c^k\big(F_{\mu_1}(a_1),\ldots, F_{\mu_n}(a_n)\big)=\pi^k\big((-\infty, a]\big)\to \pi\big((-\infty, a]\big)=c\big(F_{\mu_1}(a_1),\ldots, F_{\mu_n}(a_n)\big) 
\]
as $k\to \infty$. Since $F_{\mu_1},\ldots, F_{\mu_n}$ are continuous, this implies that
\begin{equation}\label{eq.pointwise.conv.couplings}
\lim_{k\to \infty}c^k(u)=c(u)\quad\text{for all }u\in [0,1]^n.
\end{equation}
  Now, let $\ep>0$ and $m\in \N$ with $\frac{n}{m}\leq\frac\ep2$. Let $u_i:=\frac{i}m$ for $i=0,\ldots,m$. Then, 
  \[
  c(u_{i_1+1},\ldots,u_{i_n+1})-c(u_{i_1},\ldots,u_{i_n})\leq \sum_{j=1}^n \big(u_{i_j+1}-u_{i_j}\big)\leq \frac{n}{m}<\frac\ep2
  \]
  for all $i_1,\ldots, i_n\in \{0,\ldots, m-1\}$.\ Moreover, by \eqref{eq.pointwise.conv.couplings}, there exists some $k_0\in \N$ such that
 \[
 \big|c^k(u_{i_1},\ldots, u_{i_n})-c(u_{i_1},\ldots, u_{i_n})\big|<\frac\ep2
 \]
 for all $k\in \N$ with $k\geq k_0$ and $i_1,\ldots, i_n\in \{0,\ldots, m\}$.
 
 Now, let $u\in [0,1]^n$, $k\in \N$ with $k\geq k_0$, and $i_1,\ldots,i_n\in \{0,\ldots, m-1\}$ with 
 $$u_{i_j}\leq u_j\leq u_{i_j+1}\quad\text{for all }j=1,\ldots, n.$$
 Then,
 \[
 c^k(u_{i_1},\ldots, u_{i_n})-c(u)\leq c^k(u)-c(u)\leq c^k(u_{i_1+1},\ldots, u_{i_n+1})-c(u)\quad\text{for }k\in \N.
 \]
 Since 
 \begin{align*}
  c(u)-c^k(u_{i_1},\ldots, u_{i_n})&< c(u)-c(u_{i_1},\ldots, u_{i_n})+\frac\ep2\\
  &\leq c(u_{i_1+1},\ldots, u_{i_n+1})-c(u_{i_1},\ldots, u_{i_n})+\frac\ep2< \ep
 \end{align*}
 and
 \begin{align*}
   c^k(u_{i_1+1},\ldots, u_{i_n+1})-c(u)&<\frac\ep2+c(u_{i_1+1},\ldots, u_{i_n+1})-c(u)\\
   &\leq \frac\ep2+ c(u_{i_1+1},\ldots, u_{i_n+1})-c(u_{i_1},\ldots, u_{i_n})< \ep,
 \end{align*}
 it follows that $\big|c^k(u)-c(u)\big|<\ep$ for all $k\in \N$ with $k\geq k_0$.
\end{proof}


\end{document}